\documentclass[
    onecolumn,aps,pra,
    superscriptaddress,
    nofootinbib,notitlepage
]{revtex4-2}

\usepackage{graphicx} 
\usepackage[]{xcolor}

\usepackage{amssymb}
\usepackage{amsmath}
\usepackage{amsthm}
\usepackage{mathtools}

\usepackage{bm}
\usepackage{bbold}
\usepackage{siunitx}
\usepackage{physics}
\usepackage[version=4]{mhchem}
\usepackage{enumitem}
\usepackage{tabularx}

\usepackage{qcircuit}

\usepackage{algpseudocode}
\usepackage[ruled,vlined]{algorithm2e}
\RestyleAlgo{ruled}

\usepackage[colorlinks=true, linkcolor=niceblue, citecolor=gray]{hyperref}

\definecolor{niceblue}{rgb}{.1, .25, .8}

\makeatletter
\def\@endtheorem{\endtrivlist}
\makeatother

\newtheorem{theorem}{Theorem}
\newtheorem{definition}[theorem]{Definition}
\newtheorem{lemma}[theorem]{Lemma}
\newtheorem{proposition}[theorem]{Proposition}

\newtheorem{corollary}[theorem]{Corollary}

\DeclareMathOperator{\E}{\mathbb{E}}

\newcommand{\torder}[1]{\widetilde{\mathcal{O}}\left( #1 \right)}
\newcommand{\Paragraph}[1]{\noindent \textbf{#1:}}

\newtheorem{remark}[theorem]{Remark}

\newtheorem*{question*}{Question}

\newcommand{\one}{\mathbb{1}}

\DeclareMathOperator{\poly}{poly}
\DeclareMathOperator{\rt}{rt}
\DeclareMathOperator{\polylog}{polylog}

\newcommand{\later}[1]{}

\DeclareMathOperator{\negl}{negl}

\newcommand{\SWAP}{\mathsf{SWAP}}

\newcommand{\newreptheorem}[2]{\newtheorem*{rep@#1}{\rep@title}\newenvironment{rep#1}[1]{\def\rep@title{#2 \ref*{##1}}\begin{rep@#1}}{\end{rep@#1}}}
\makeatother

\newcommand{\swap}{\mathsf{SWAP}}

\newcommand{\PP}{\mathbb{P}}
\newcommand{\EE}{\mathbb{E}}
\newcommand{\TV}{\mathrm{TV}}

\usepackage{subcaption}
\usepackage{comment}

\begin{document}

\title{Arts \& crafts: Strong random unitaries and geometric locality}
\author{Marten Folkertsma}
\affiliation{QuSoft and UvA, Amsterdam, the Netherlands}
\author{Lorenzo Grevink}
\affiliation{QuSoft and CWI, Amsterdam, the Netherlands}
\author{Jonas Helsen}
\affiliation{QuSoft and CWI, Amsterdam, the Netherlands}
\author{Alicja Dutkiewicz}
\affiliation{QuSoft and CWI, Amsterdam, the Netherlands}

\begin{abstract}
We study the problem of constructing strong approximate unitary $k$-designs on $D$-dimensional grids (and more generally on Cartesian products of graphs), building on the work of Schuster et al.~\cite{schuster2025strong} which establishes strong unitary designs in 1D and in all-to-all connectivity. We provide two constructions. The first construction leverages the existing all-to-all connectivity result with general routing theory to provide flexible (but slightly suboptimal) strong $k$-designs in arbitrary connectivities. The second construction is more direct, requires no auxiliaries and has provably optimal depth (in  the number of qubits $n$) for $D$-dimensional grids with constant dimension. Combining these techniques also allows us to construct strong pseudorandom unitaries on $D$-dimensional grids with provably optimal depth.
\end{abstract}

\maketitle

\section{Introduction}
Random quantum processes have been a growing field of study in quantum computing and quantum information theory, having significant applications in numerous fields of study including quantum machine learning~\cite{larocca2025barren}, many-body physics~\cite{fisher2023random}, and quantum gravity~\cite{hayden2007black, brown2023quantum}.
Moreover, they are widely used in practical applications such as device benchmarking~\cite{emerson2005scalable, elben2023randomized}, state tomography~\cite{huang2020predicting, helsen2023thrifty} and quantum cryptography~\cite{ji2018pseudorandom, kretschmer2021quantum}.
A natural theoretical model for randomness in quantum systems is given by Haar-random unitaries, which correspond to the uniform distribution over the unitary group.
However, simple counting arguments show that implementing a Haar-random unitary requires resources exponential in the system size~\cite{knill1995approximation}, making them computationally intractable and physically unrealistic.

Consequently, a substantial line of work~\cite{harrow2009random, nakata2016efficient,hunter2019unitary, haferkamp2022random,haferkamp2023efficient,harrow2023approximate, metger2024simple, schuster2025random, chen2025incompressibility, ma2025construct, schuster2025strong, haah2025efficient} is devoted to finding ensembles of unitaries that are efficient to construct, but appear to be Haar random in certain contexts.
For example, an ensemble that appears to be random to any computationally-bounded adversary is called a \emph{pseudorandom unitary} (PRU), and an ensemble that appears to be random to any adversary that samples a unitary at most $k$ times is called a \emph{unitary $k$-design}.
More recently, \emph{strong} variants of these notions have been introduced, in which the adversary is allowed access not only to a unitary $U$, but also to its inverse $U^\dagger$~\cite{schuster2025strong}.
These provide a more faithful approximation to Haar randomness, as many quantum algorithms and physical processes naturally involve both forward and inverse evolutions, whereas the standard (weak) design condition provides no guarantee in this setting.
Therefore, in this work we focus on strong designs and PRUs.

Recent works have provided efficient constructions of such random unitary ensembles via random quantum circuits -- sequences of local random gates, typically acting on pairs of qubits.
A crucial aspect of these constructions determining the scaling of resources with the number of qubits $n$ is the underlying connectivity, which specifies which pairs of qubits can interact.
In the most powerful setting of \emph{all-to-all} connectivity, strong designs can be constructed in depth $\order{\log(n)}$~\cite{schuster2025strong}, and weak designs can be achieved in a surprisingly low depth of $\order{\log\log(n)}$~\cite{cui2025unitary}.
At the other extreme, when qubits are arranged in a one-dimensional chain and can only interact with their nearest neighbors, a lightcone argument
shows that strong designs require depth $\Omega(n)$, and this bound is achieved by~\cite{schuster2025strong}, while weak designs can be achieved in
depth $\order{\log n}$~\cite{schuster2025random, laracuente2026approximate}, representing an exponential slowdown in both cases compared to all-to-all.
These modern, optimal constructions rely on \emph{gluing lemmas}, which allow combining small designs into larger ones.
More general $D$-dimensional connectivities were studied prior to the introduction of strong designs and gluing lemmas, achieving the depth $d = \order{n^{1/D}}$~\cite{harrow2023approximate} for weak designs, which turned out to be not at all tight.
The optimal depth for strong designs in intermediate connectivities remains an open question.

This question is not merely theoretical -- intermediate connectivity structures are highly relevant in practice.
While some emerging platforms based on neutral atoms~\cite{bluvstein2026fault} or ion traps~\cite{ransford2025helios} are expected to achieve (nearly) all-to-all connectivity, many others are designed to respect local geometries.
For example, superconducting qubits~\cite{alam2026onset, bravyi2021hadamard, google2025quantum} and silicon quantum dots \cite{li2018crossbar} are typically laid out in regular two-dimensional grids.
Higher dimensional geometries also arise: for instance, irregular three-dimensional geometries naturally appear in NV centers platforms, and three-dimensional lattices error-correcting codes have been studied due to the possibility of transversal magic gates~\cite{bombin2015gauge}.
Moreover, beyond hardware considerations, regular lattices in arbitrary dimensions have been subject of theoretical studies of quantum chaos~\cite{liao2022effective}.
Together, these considerations motivate the main question of this paper:
\begin{question*}
    At what depth do random circuits form strong unitary designs and PRUs in general connectivities?
\end{question*}
In particular we focus on $D$-dimensional grids, which provide a natural and analytically tractable model of geometrically local quantum systems.
Such grids interpolate between one-dimensional chains and all-to-all architectures.
Building on the results of Ref.~\cite{schuster2025random}, we show that strong designs and PRUs can be constructed in a circuit depth of $\Theta(n^{1/D})$, matching the natural lightcone scaling in $D$-dimensional grids.

\subsection{Summary of results}

We address the question using two different approaches.
Our main result builds on the \emph{strong gluing lemma} of Schuster et al.~\cite{schuster2025strong}, which shows that small strong unitary designs can be composed into larger strong designs, provided they are sandwiched between large strong 2-designs.
We combine this with a new construction of strong 2-designs on $D$-dimensional grids to obtain optimal-depth strong unitary designs without auxiliary qubits.
\begin{theorem}[Theorem~\ref{theorem:gluing_lattice_result}]\label{theorem:main_better}
    For any constant $D$, there exist strong $\varepsilon$-approximate unitary $k$-designs on a $D$-dimensional grid in depth $d = \order{n^{1/D} + k\log^7(k)\log(nk/\varepsilon)}$, without any auxiliary qubits.
\end{theorem}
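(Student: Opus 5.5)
We build the design by the same two-layer recipe as the 1D construction of Schuster et al.~\cite{schuster2025strong}, but with patches that are blocks of the grid. We partition the $D$-dimensional grid (side length $L = n^{1/D}$) into hypercubic blocks of side $\xi$, where $\xi^D = \Theta(\log(nk/\varepsilon))$ qubits per block. We then form two staggered tilings: the second tiling is the first shifted by $\xi/2$ in every coordinate direction. The layout is chosen so that every block of one tiling overlaps the adjacent blocks of the other in regions of $\Omega(\xi^D)$ qubits. Iterating the two-patch gluing in a brickwork fashion only works directly in 1D, so in $D$ dimensions we instead use the strong gluing lemma in its ``sandwich'' form. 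If $\mathcal{C}$ is a strong approximate 2-design on all $n$ qubits, and $\mathcal{D}$ is any product of small strong $k$-designs on a patch structure with sufficiently large overlaps, then $\mathcal{C}\,\mathcal{D}\,\mathcal{C}$ (or an analogous three-layer product) is a strong $k$-design on all $n$ qubits. The error is roughly $\mathrm{poly}(k)\,2^{-\Omega(\text{overlap})}$ per patch, times the number of patches.

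The proof proceeds in three steps.

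\textbf{Step 1: small patch designs.} On each block of $m = O(\xi^D)$ qubits, we embed a 1D path (a snake ordering that is geometrically local). We then apply the 1D strong $k$-design of~\cite{schuster2025strong} along it, in depth $O(m + k\log^7(k)\log(nk/\varepsilon))$. The snake ordering makes each 1D gate nearest-neighbour on the grid. Since $m = O(\log(nk/\varepsilon))$, this contributes the additive term in the theorem.

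\textbf{Step 2: a strong 2-design on the full grid in depth $O(n^{1/D})$.} I would proceed by induction on dimension. Regard the $D$-grid as $L$ parallel copies of a $(D-1)$-grid, stacked along the last axis. Apply strong 2-designs on each $(D-1)$-slice in parallel. Next apply strong 2-designs along each 1D column in the last direction, in depth $O(L)$. Then apply the slice designs again. For 2-designs one can analyse the second moment directly: the moment operator of a strong 2-design on a subsystem projects onto the span of the permutation operators on that subsystem, tensored with twirled $U,U^\dagger$ pairings. Alternating slice and column twirls then mixes these projectors towards the global permutation/Brauer-type span. The key quantity is the spectral gap, or equivalently the product of projectors $P_{\text{slices}}P_{\text{cols}}P_{\text{slices}}$ compared with $P_{\text{global}}$. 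This is a finite-dimensional computation over the few invariant operators per site, giving a relative error $2^{-\Omega(L)}$ or $\mathrm{poly}(n)\cdot 2^{-\Omega(\text{min side})}$. A constant number of repetitions, or slightly longer columns, then drives the error below $\varepsilon$. The total depth obeys $T(D) = 2T(D-1) + O(L)$, giving $O(2^D L) = O(n^{1/D})$ for constant $D$.

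\textbf{Step 3: assembly.} We apply the gluing lemma with $\mathcal{C}$ from Step 2 and $\mathcal{D}$ the two staggered layers of Step 1. The union bound over the $n/\xi^D$ patches costs $n\,\mathrm{poly}(k)\,2^{-\Omega(\xi^D)} \le \varepsilon$ by the choice of $\xi$. Summing the depths gives the claim, and no auxiliary qubits are used anywhere.

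The main obstacle is Step 2. This means proving that the dimension-recursive composition is a \emph{strong} 2-design, where $U^\dagger$ queries require controlling the larger commutant, with bounded error in multiplicative (relative) rather than additive sense. That form is what the gluing lemma needs. I would first try to reduce it to the 1D strong 2-design analysis by treating each $(D-1)$-slice, after twirling, as a single ``super-site'' with a local invariant space. The column step is then exactly a 1D chain of super-sites, to which the 1D gap bound applies. The risk is that the super-site local dimension grows with the slice size; this must be shown to affect the bound only through a $\log$ factor, absorbed by the $O(L)$ column depth.
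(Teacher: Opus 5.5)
Your Step 2 is the gap, and it is the only genuinely new ingredient. Once a depth-$\order{n^{1/D}}$ strong 2-design on the grid exists, the theorem is just Corollary~\ref{corollary:gluing_lemma} plus Lemma~\ref{lem:1d-strong-kdesigns}. You leave this step unproven, and the route you sketch aims at the wrong target. Lemma~\ref{lemma:strong_gluing} needs $C,D$ to be strong 2-designs in the measurable-error sense, i.e.\ indistinguishable from Haar in adaptive two-query experiments such as ``apply $U$, then an arbitrary $W$, then $U^\dagger$''. It does not ask for relative error on forward moments; that is exactly what the log-depth weak designs have without being strong. A bound on $\|P_{\mathrm{slices}}P_{\mathrm{cols}}P_{\mathrm{slices}}-P_{\mathrm{global}}\|$ controls an averaged moment operator. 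Turning it into measurable error against adaptive $U$/$U^\dagger$ experiments generically costs factors exponential in $n$, which swamp a $2^{-\Omega(n^{1/D})}$ gap. Your super-site reduction makes this explicit (local dimension $2^{\Theta(n^{(D-1)/D})}$), and you give no mechanism by which it would enter only logarithmically.

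The missing idea is to reduce the mixed queries to a classical Markov chain on Pauli strings. For ensembles invariant under conjugation, transposition and single-qubit Paulis (e.g.\ independent random Cliffords on the lines), Proposition~\ref{prop:U_barU} bounds the error against any experiment querying one of $U,U^T$ and one of $\bar U,U^\dagger$ by $\max_Q\TV(\mathcal{E}_\mu(\cdot,Q),\mu_{\PP^*_n})$. Lemma~\ref{lem:Cartasian_mixing} then treats the three-layer product purely probabilistically:
\begin{itemize}
\item the first line layer spreads any non-identity Pauli to weight at least $n_R/3$ in some row, except with probability $e^{-\Omega(n_R)}$;
\item the middle layer, on copies of the previously built grid, leaves some row all-identity with probability $\order{n_C2^{-n_R/3}}$;
\item the final line layer produces a product distribution within $n_C4^{-n_R}$ of uniform on $\PP^*_n$.
\end{itemize}
Iterating with exact Clifford 2-designs on the lines gives error $2(n+1)e^{-n^{1/D}\log(2)/3}$ at depth $\order{Dn^{1/D}}$ (Lemma~\ref{lem:UUbar_design}). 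There the doubled layer sits on the lines, so the depth recursion is additive rather than your $T(D)=2T(D-1)+\order{L}$. The same-type queries are covered by composing with the 1D weak relative-error 2-design of depth $\order{\log(n/\varepsilon)}$ (Lemma~\ref{lem:convolution_weakandstrong}).

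Separately, your Step 3 relies on an unproven multi-overlap gluing for staggered $D$-dimensional tilings. Your remark that brickwork gluing ``only works in 1D'' misreads Corollary~\ref{corollary:gluing_lemma}: its patches are arbitrary qubit subsets arranged in a chain. Cut a snake path through the whole grid into consecutive patches of $\xi=\Theta(\log(nk^2/\varepsilon))$ qubits. Then every $2\xi$-patch is a grid path on which Lemma~\ref{lem:1d-strong-kdesigns} runs, and the corollary applies as is.
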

Note that our result also holds for $D = 1$, providing a small improvement over the results in Schuster et al.\ in some regimes. 

We provide optimal-depth strong pseudorandom unitaries using the same gluing framework, under the same well-established cryptographic assumption of sub-exponential hardness of learning with errors (LWE) as in Ref.~\cite{schuster2025strong}.
\begin{theorem}[Theorem~\ref{theorem:pru}]\label{theorem:main_pru}
    Assuming sub-exponential hardness of LWE, for any constant $D$, there exists a strong PRU on a $D$-dimensional grid in circuit depth $d = \Theta(n^{1/D})$, without using auxiliary qubits.
\end{theorem}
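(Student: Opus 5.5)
The plan is to follow the gluing framework of Schuster et al.~\cite{schuster2025strong}. We replace their small random-unitary patches by small \emph{strong PRUs}, and we replace the large sandwiching strong 2-designs by the grid strong 2-design used for Theorem~\ref{theorem:main_better}.

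\textbf{Construction.} Partition the $D$-dimensional grid into contiguous patches of $\xi = \polylog(n)$ qubits each, and arrange them in two staggered layers as in the gluing lemma. Each patch receives an independently keyed strong PRU on $\xi$ qubits. We take the all-to-all construction of~\cite{schuster2025strong}, which is secure under sub-exponential LWE when the patch size is polylogarithmic, and compile it onto the patch. A contiguous patch of $\polylog(n)$ qubits can route any all-to-all circuit with $\polylog(n)$ overhead, since its diameter is $\xi^{1/D}$. Each patch circuit therefore has depth $\polylog(n)$.

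\textbf{Security reduction.} I would argue by a hybrid sequence.
\begin{enumerate}[label=(\roman*)]
\item Replace each small strong PRU by a Haar-random unitary on its patch. There are $\poly(n)$ patches, and each swap costs the adversary's strong-PRU distinguishing advantage. Sub-exponential LWE hardness makes this advantage negligible in $n$, even though the patch size is only $\polylog(n)$. This is exactly why the sub-exponential assumption is needed, as in~\cite{schuster2025strong}.
\item Invoke the strong gluing lemma. Two layers of patch-Haar unitaries, with large strong 2-designs inserted between them, are indistinguishable from Haar to any adversary making $k = \poly(n)$ queries to $U$ and $U^\dagger$. The error is $\order{k^2 n/2^{\xi}}$, which is negligible for $\xi = \omega(\log n)$.
\end{enumerate}

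\textbf{Sandwiching 2-designs and depth.} The large strong 2-designs must themselves be efficient. Here I would reuse the strong 2-design on the grid constructed for Theorem~\ref{theorem:main_better}, which has depth $\order{n^{1/D} + \log(n/\varepsilon)}$. Because only two-design properties are used, we may take it to be information-theoretic, so no further assumption enters. Its error $\varepsilon$ can be set to $2^{-\omega(\log n)}$ at cost only $\polylog(n)$ in depth. Summing, the total depth is $\order{n^{1/D}} + \polylog(n) = \order{n^{1/D}}$ for constant $D$.

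\textbf{Lower bound.} The matching $\Omega(n^{1/D})$ comes from the lightcone argument. In depth $o(n^{1/D})$, some pair of qubits lies outside each other's lightcone. An adversary can then detect this efficiently with access to $U$ and $U^\dagger$: apply $U^\dagger (P_a\otimes \one) U$ and check whether it commutes with a Pauli on a distant qubit. Such a test distinguishes from Haar with $O(1)$ queries.

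\textbf{Main obstacle.} The step I expect to be hardest is justifying hybrid (i). The PRU adversary is $\poly(n)$-time and adaptive with inverse queries, while each small PRU has only $\polylog(n)$ security parameter. One must check that simulating the rest of the circuit (the other patches and the 2-designs) can be done efficiently inside the reduction, and that the sub-exponential LWE hardness gives $\negl(n)$ advantage when the key length is $\polylog(n)$. I would follow the corresponding argument in~\cite{schuster2025strong} essentially verbatim. The grid enters only through the 2-design layers, which the reduction samples itself.
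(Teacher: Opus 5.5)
Your proposal is correct and is essentially the paper's proof. It uses sub-exponentially secure, auxiliary-free all-to-all PRUs (Lemma~\ref{lemma:PRU_buildingblocks}) on patches of $\xi=\log^{2/\delta}(n)$ qubits, compiled onto sub-grids by routing in $\polylog(n)$ depth, and glues them via Corollary~\ref{corollary:PRU_gluing} between two copies of the grid strong $2$-design of Theorem~\ref{theorem:2-designs_optimal}, whose error floor $2^{-\Omega(n^{1/D})}$ permits $\varepsilon=2^{-\polylog(n)}$; the lower bound comes from a lightcone argument, and your OTOC-style commutation test is a valid substitute for the Hamming-weight test of Proposition~\ref{prop:lower_bound}.

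Two small points:
\begin{itemize}
\item You should name the auxiliary-free building blocks explicitly, since the statement forbids auxiliary qubits.
\item The simulation issue you flag in hybrid (i) is handled most cleanly in the paper's framing. Compare the patch PRUs to efficiently implementable strong $q$-designs (Lemma~\ref{lem:1d-strong-kdesigns}, with $q$ the number of queries) rather than to Haar, so the reduction can simulate already-replaced patches in polynomial time. Then apply Corollary~\ref{corollary:gluing_lemma} with $k=q$.
\end{itemize}
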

These constructions are optimal in the number of qubits $n$ for constant $D$.
\begin{remark}
    Both results match the lower bound, with respect to $n$, provided by Proposition~\ref{prop:lower_bound} when $D$ is a constant.
\end{remark}

Our alternative approach is straightforward, but nevertheless comes surprisingly close to the lower bounds provided by the lightcone arguments. The standard approach to implementing non-local interactions in a device constrained by limited connectivity is to intersperse a layer of non-local interactions with permutations, bringing qubits that need to interact close to one another. The task of implementing these permutations via neighboring operations is known as \emph{routing}. There is a strong line of research on finding optimal routing protocols respecting several interaction topologies using heuristics~\cite{Maslov_2008, Shafaei_2013, Shafaei_2014, Kole_2016, Booth_2018, Zulehner_2018, Siraichi_2019, Cowtan_2019, Li_2019, Wagner_2023}, as well as more explicit results on circuit overheads when implementing these routing protocols \cite{Kivlichan_2018, Childs_2019, Ogorman_2019, Yuan_2025}. 

We directly apply results from routing to the random unitary designs given by Schuster et al.~\cite{schuster2025strong}, which provides random unitary designs of near optimal depth. 

\begin{theorem}[Corollary~\ref{cor:routed_result}]\label{theorem:main_designs}
For any constant $D$, there exist strong $\varepsilon$-approximate unitary $k$-designs in the following circuit depth using circuits on a $D$-dimensional grid:
\begin{enumerate}
        \item $d = \order{(nk)^{1/D}\poly\log(n,k) \cdot \log\log(1/\varepsilon)}$ with $\torder{nk}$ auxiliary qubits.
        \item $d = \order{n^{1/D} k \cdot \poly\log(n) \cdot \log\log(k/\varepsilon)}$ with $\torder{n}$ auxiliary qubits.
\end{enumerate}
\end{theorem}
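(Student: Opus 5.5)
The plan is to take the all-to-all strong design construction of Schuster et al.~\cite{schuster2025strong} as a black box and compile it onto the $D$-dimensional grid by routing. That construction is a two-layer brickwork. Its first and last layers consist of large strong $2$-designs (or $k$-designs) on patches of size $\xi = \order{\log(nk/\varepsilon)}$. Its middle layer is placed on patches that straddle the boundaries. Each patch itself is realized as a random circuit with all-to-all connectivity on $\xi$ qubits.

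The key observation is that any layer of $n/2$ disjoint two-qubit gates with arbitrary pairing can be simulated on a $D$-dimensional grid of side $L = n^{1/D}$ in two steps. First, a permutation brings paired qubits to adjacent sites. Then the gate layer is applied, and the permutation is undone. Standard routing results (e.g.\ sorting networks on meshes / token swapping on Cartesian products of graphs) give routing depth $\order{D\, n^{1/D}}$, or $\order{n^{1/D}\polylog n}$ if we use the general Cartesian-product routing theorem. The all-to-all circuit then has depth $d_{\mathrm{a2a}}$, which for \cite{schuster2025strong} scales as $\order{\log(n/\varepsilon)\cdot \poly(k)}$-type bounds. Naive per-layer routing would therefore give depth $d_{\mathrm{a2a}}\cdot n^{1/D}$, which is too large. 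So the plan is to be smarter.

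For item 2, we first route once to lay out every patch of size $\xi$ as a contiguous sub-grid block. Inside a block of $\xi$ qubits, the all-to-all small designs are themselves implemented by routing within the block. Alternatively, we use the 1D strong design of \cite{schuster2025strong} on a snake path through the block, which costs depth $\order{\xi\, k\,\polylog}$. Then we reroute globally to the second brickwork layer. This costs $\order{n^{1/D}\polylog n}$ per global rearrangement, with a constant number of rearrangements. The $\torder{n}$ auxiliaries are used to make the routing/teleportation-style swaps depth-efficient. We keep the dependence as $n^{1/D} k\,\polylog(n)\log\log(k/\varepsilon)$ by using the version of the small designs whose depth is linear in $k$ and only doubly-logarithmic in $1/\varepsilon$ in the relevant parameter.

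For item 1, we instead parallelize over $k$. We blow up the register to $\torder{nk}$ qubits, so that the $k$-dependent sequential depth of the all-to-all construction is traded for width. Concretely, the parallel-in-$k$ constructions use $\torder{nk}$ qubits and depth $\polylog(n,k)\log\log(1/\varepsilon)$. Each of those polylog layers is then routed on a grid of $\torder{nk}$ sites, costing $(nk)^{1/D}\polylog$ per layer.

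The error analysis is trivial because routing is exact: permutations of qubits by SWAP gates are deterministic unitaries. Conjugating a strong design by a fixed permutation, and undoing it, preserves the strong design property with the same $\varepsilon$. The main obstacle I expect is the bookkeeping in item 1. The difficulty is making sure that the number of layers needing global routing is only polylogarithmic rather than $\poly(k)$. Equivalently, one must identify a version of the all-to-all strong construction whose non-local depth is $\polylog(n,k)\log\log(1/\varepsilon)$ when given ancillas. If no such version is available, I would fall back to grouping consecutive layers acting within fixed blocks so that they need no global rerouting.
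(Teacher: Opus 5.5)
There is a genuine gap, and it comes from your premise that per-layer routing is too expensive. The paper's proof is exactly that naive compilation. It takes the auxiliary-assisted all-to-all constructions in items 1 and 2 of Theorem~\ref{thm:all-to-all-designs}, whose depths are $\order{\log n + \log k\cdot\log\log(nk/\varepsilon)}$ and $\order{\log n + k\log\log(nk/\varepsilon)}$. The $\log(n/\varepsilon)$-type depth you quote belongs to the auxiliary-free item 3. It then applies Theorem~\ref{thm:routing} layer by layer on a grid holding all $N$ qubits, data plus auxiliaries, at cost $\order{DN^{1/D}}$ per layer. With $N=\torder{nk}$ this gives item 1. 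With $N=\torder{n}$ it gives $\order{n^{1/D}\polylog(n)\,(\log n+k\log\log(nk/\varepsilon))}$, which is item 2 after writing $\log\log(nk/\varepsilon)=\order{\log\log n+\log\log(k/\varepsilon)}$. The statement's doubly-logarithmic $\varepsilon$-dependence, multiplied by $n^{1/D}$, is the signature of this multiplicative compilation. Your item-1 paragraph is essentially this argument, and the ``main obstacle'' you flag is resolved by item 1 of Theorem~\ref{thm:all-to-all-designs}. The same one-line argument handles item 2, and your remark that $\SWAP$ routing is exact is all the error analysis needed.

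Your item-2 scheme, by contrast, does not work as written. First, it misstates the gluing construction: Corollary~\ref{corollary:gluing_lemma} sandwiches the patch $k$-designs between strong 2-designs on all $n$ qubits, not on patches. Patch-local blocks interleaved with a constant number of global $\SWAP$-network rearrangements cannot replace these, because permutations map single-qubit operators to single-qubit operators. Such a circuit therefore has lightcones of size $\xi^{\order{1}}\ll n$ and fails the test in the proof of Proposition~\ref{prop:lower_bound}. Second, the snake-path alternative invokes Lemma~\ref{lem:1d-strong-kdesigns}, whose depth $\order{\log^7(k)(\xi k+\log(n/\varepsilon))}$ with $\xi=\Theta(\log(nk/\varepsilon))$ is linear in $\log(1/\varepsilon)$. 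Even with correctly compiled global 2-designs, this only reproduces Theorem~\ref{theorem:gluing_lattice_result}, with its lower bound on $\varepsilon$, and that does not imply item 2 as $\varepsilon\to0$. Finally, the $\torder{n}$ auxiliaries are not needed for routing at all, since $\SWAP$-based routing uses none. They are the ancillas consumed by the all-to-all construction being compiled, which is why they appear in the statement.
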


\noindent Theorem~\ref{theorem:main_designs} gives a better scaling in $k$ and $\varepsilon$ that Theorem~\ref{theorem:main_better}, at the cost of using auxiliary qubits. \\

Our results extend to connectivities of the form of a Cartesian product of graphs, which includes $D$-dimensional grids as a special case.
However, many practically relevant connectivities do not fall into this family -- for instance, the hexagonal lattices used in some superconducting qubit platforms~\cite{alam2026onset} or the biplanar graphs underlying certain error correcting codes~\cite{bravyi2021hadamard}.
Extending our techniques to such connectivities remains an interesting open problem.
To this end we note that the routing strategy that yields Theorem \ref{theorem:main_designs} is structurally very similar to the construction of the strong two design underlying Theorem \ref{theorem:main_better}.
This suggest that routing results for more general graphs could perhaps be more directly transformed into constructions of strong two-designs on more general connectivities. 

\newpage

\section{Background}

\noindent In this section we review the necessary background for our results.

\subsection{Preliminaries}

\Paragraph{Connectivity graph} The connectivity of a quantum device can be specified by a \textit{connectivity graph}. The vertices of the graph represent the qubits, while an edge indicates that one can apply an entangling gate on the pair of qubits that it connects. In this work we study how to construct designs on a connectivity graph that is a Cartesian product of smaller graphs. A Cartesian product of graphs is defined as follows.

\begin{definition}
\label{def:cartasian_product}
    Let $G = (V,E), G' = (V', E')$ be two graphs. The Cartesian product of these graphs $G\cross G'$ is defined as the graph with vertex set $V \cross V' = \{(v,v')|v\in V, v' \in V'\}$ and with an edge $(u,u') \sim (v,v')$ if either $u = v \land (u',v') \in E'$ or $u'=v' \land (u,v) \in E$.
\end{definition}

\noindent A $D$-dimensional grid $G_{D,\mathrm{grid}}$ is defined as the Cartesian product of $D$ line graphs of $n^{1/D}$ qubits each, adding up to a total of $n$ qubits, i.e.,
\begin{equation}\label{eqn:grid_prod_linegraphs}
    G_{D,\mathrm{grid}} = L_{n^{1/D}}^{0}\cross L_{n^{1/D}}^{1}\cross \dots \cross L_{n^{1/D}}^{D-1},
\end{equation}
where the superscript is only added to keep track of the different line graphs.
Our work focuses on $D$-dimensional grids, but our results can easily be generalized to Cartesian products of other graphs.\\

\Paragraph{TV distance} We define the total variation (TV) distance between two probability distributions 
\begin{definition}[TV distance]
    Let $P$ and $Q$ be two probability distributions over events $\Omega$, then the TV distance is
    \begin{equation}
        \TV(P,Q) = \frac12\sum_{x\in \Omega} \left|P(x) - Q(x)\right|_1.
    \end{equation}
\end{definition}
\noindent A useful fact about the TV distance is that it respects subadditivity.
\begin{lemma}[Subadditivity TV distance]
\label{lem:subadditivity TV distance}
Given two product distributions $P = P_1 \times P_2$ and $Q = Q_1 \times Q_2$ over $\Omega =\Omega_1 \times \Omega_2$, then the TV distance between $P$ and $Q$ can be upper bounded by the local TV distances as
\begin{equation}
    \TV(P,Q) \leq \TV(P_1, Q_1) + \TV(P_2, Q_2).
\end{equation}
\end{lemma}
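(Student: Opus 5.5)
The plan is a hybrid (triangle-inequality) argument that passes through an intermediate product distribution differing from $P$ and $Q$ in only one factor. Define $R = Q_1 \times P_2$ on $\Omega = \Omega_1 \times \Omega_2$. The TV distance is a metric on distributions over $\Omega$, since it is half the $\ell_1$ norm of the difference. The triangle inequality therefore gives
\begin{equation*}
    \TV(P,Q) \leq \TV(P_1\times P_2, Q_1\times P_2) + \TV(Q_1 \times P_2, Q_1\times Q_2).
\end{equation*}

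The key step is to show that a common factor can be removed from a TV distance between product distributions. For the first term, we write out the sum:
\begin{equation*}
    \TV(P_1\times P_2, Q_1\times P_2) = \frac12 \sum_{x_1\in\Omega_1}\sum_{x_2\in\Omega_2} P_2(x_2)\,\big|P_1(x_1)-Q_1(x_1)\big|.
\end{equation*}
This step uses that $P_2(x_2)\geq 0$, so it can be pulled out of the absolute value. Summing over $x_2$ and using $\sum_{x_2}P_2(x_2)=1$ leaves exactly $\TV(P_1,Q_1)$. The second term is handled symmetrically, factoring out $Q_1(x_1)$, and gives $\TV(P_2,Q_2)$. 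Combining the two yields the claimed bound.

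No step is a real obstacle. The only care needed is to apply the triangle inequality pointwise inside the sum, $|a-c|\leq|a-b|+|b-c|$ with $b=Q_1(x_1)P_2(x_2)$, before summing. For infinite (countable) $\Omega$, absolute convergence justifies exchanging the order of summation. The same argument extends by induction to products of more than two distributions, which is presumably how the lemma will be used across the $D$ factors of the grid.
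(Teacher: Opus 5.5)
Your proof is correct: the pointwise hybrid through $Q_1\times P_2$ works. Pulling the nonnegative common factor out of the absolute value and summing it to $1$ gives exactly $\TV(P_1,Q_1)+\TV(P_2,Q_2)$.

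The paper states this lemma as a standard fact with no proof, so your argument supplies the justification it omits. One small correction to your closing guess about usage: the paper invokes the $n_C$-fold version (your induction) across the rows in Step~3 of the proof of Lemma~\ref{lem:Cartasian_mixing}, where the output distribution of $\mu_R^{\otimes n_C}$ factorizes row by row, not across the $D$ grid directions.
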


\Paragraph{Strong unitary designs and PRUs}  We introduce some notions of ensembles over the unitary group that resemble the Haar measure.

A strong unitary $k$-design $\mathcal{E}$ is an ensemble that is indistinguishable from Haar-random in any experiment that samples $U, U^T, \bar{U}$ or $U^\dagger$ at most $k$ times up to some $\varepsilon$. More precisely, if $A$ is an experiment that samples from $\mathcal{E}$ at most $k$ times, than the expected output of the algorithm is $\varepsilon$-close in trace distance to the outcome if $A$ would have sampled from the Haar measure. An algorithm that samples a unitary $k$ times outputs a state
\begin{equation}
        \ket{\mathcal{A}^U_k} = \prod_{0 < i \leq k} (W_i (V_i\otimes \one)) \cdot W_0\ket{0^{n + m}}
    \end{equation}
for some unitaries $W_i$, where $V_i \in \{U, U^T, \bar{U}, U^\dagger\}$ depending on $i$. Here, $m$ is the number of auxiliary qubits and can be arbitrarily large. We thus say that $\mathcal{E}$ is an $\varepsilon$-approximate strong unitary $k$-design if
\begin{equation}
    \left|\E_{U\sim \nu} \Tr[O\ketbra{\mathcal{A}^U_k}] - \E_{U\sim H} \Tr[O\ketbra{\mathcal{A}^U_k}]\right| < \varepsilon
\end{equation}
for every choices of $m, W_i, V_i$ and any observable $O$.
A strong unitary $k$-design can easily be extended to also being secure against controlled queries by a reduction laid out in~\cite{sheridan2009approximating}. A strong design with $\varepsilon = 0$ is called an exact design.
Strong unitary designs stand in contrast with \textit{weak unitary designs}, that are only secure against adversaries that query the unitary $U$, but not $U^T, \bar{U}$ or $U^\dagger$.

A similar notion to strong unitary designs are strong pseudorandom unitaries (PRU). Strong PRUs are ensembles over the unitary group that are indistinguishable from Haar random to any experiment that runs in polynomial time. In addition we require that PRUs are efficiently computable. More precisely, an ensemble $\mathcal{E}$ over the unitary group is a strong PRU if
    \begin{itemize}
        \item for every experiment $\mathcal{A}^U$ that runs in polynomial time and has oracle access to the unitaries $U$, $U^T, \bar{U}$ and $U^\dagger$, and for any two-outcome observable $O$, we have that
        \begin{equation}
            |\E_{U\sim \mathcal{E}} \Tr[O\ketbra{\mathcal{A}^U}] - \E_{U\sim H} \Tr[O\ketbra{\mathcal{A}^U}]| < \negl(n),
        \end{equation}
        where $\negl(n)$ is any function that is smaller than every inverse polynomial and
        \item there is a polynomial-time algorithm that implements $\mathcal{E}$.
    \end{itemize}    

\noindent As PRUs are secure against experiments that query the unitary a polynomial number of times, one might think that PRUs are always unitary $k$-designs for fixed $k$. Note, however, that there is no restriction on the runtime of a distinguishing experiment for unitary designs, so PRUs and unitary designs are incomparable.
On the other hand, an ensemble that is a strong unitary $k$-design for $k$ larger than any polynomial, is automatically a strong PRU.
Similarly as for unitary designs, one can define weak PRUs and strong PRUs, depending on whether they are secure against inverses.

It was proven in~\cite{ma2025construct} that the existence of a quantum-secure one-way functions implies the existence of strong PRUs. Later,~\cite{schuster2025strong} proved that strong PRUs can be constructed in logarithmic depth in the all-to-all connectivity, relying on the stronger assumption of sub-exponential hardness of learning with errors (LWE).

\subsection{Summary of Ref.~\cite{schuster2025strong}}

\noindent In this section we review the results of Schuster et al.~\cite{schuster2025random} that we will use in this work.
For completeness, we provide proofs for several statements that are implicit in Ref.~\cite{schuster2025random}.\\

\begin{figure}
\centering
\begin{subfigure}[t]{0.45\textwidth}
    \includegraphics[height=1.2in]{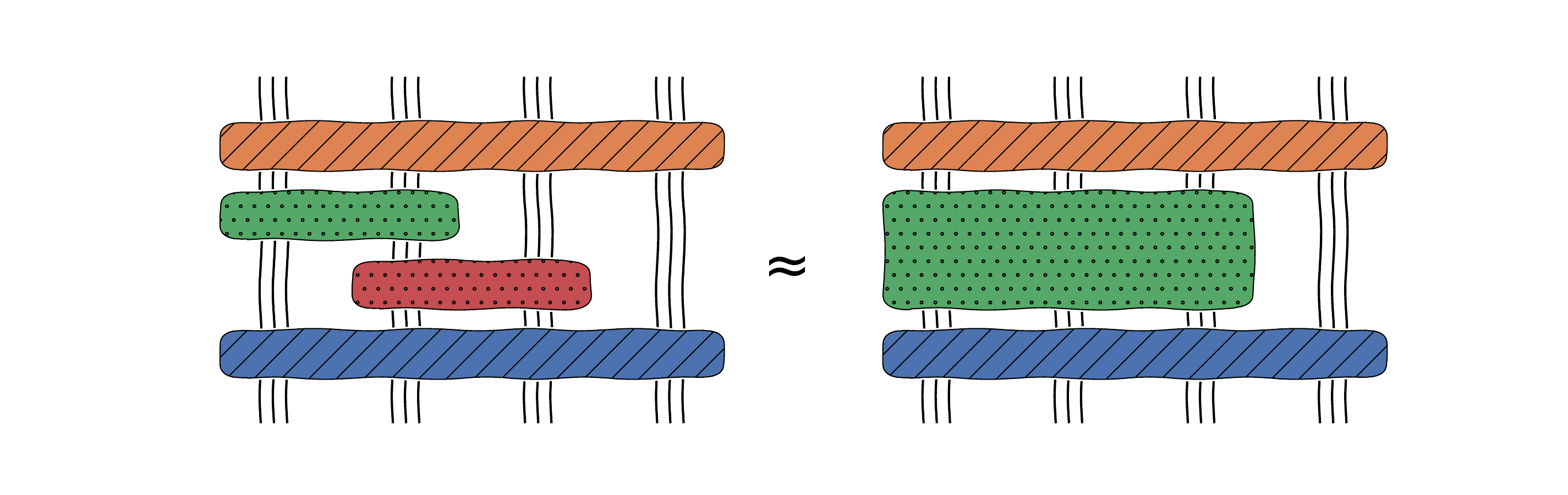}
    \caption{A key step in our proofs is the strong gluing Lemma~\ref{lemma:strong_gluing}. Two strong $k$-designs can be glued together, provided that on both sides a strong 2-design is applied. Note that the strong 2-designs also act on an additional register not touched by the $k$-designs.}
    \label{fig:glueing_illustration}
\end{subfigure}
\begin{subfigure}[t]{0.45\textwidth}
    \includegraphics[height=1.2in]{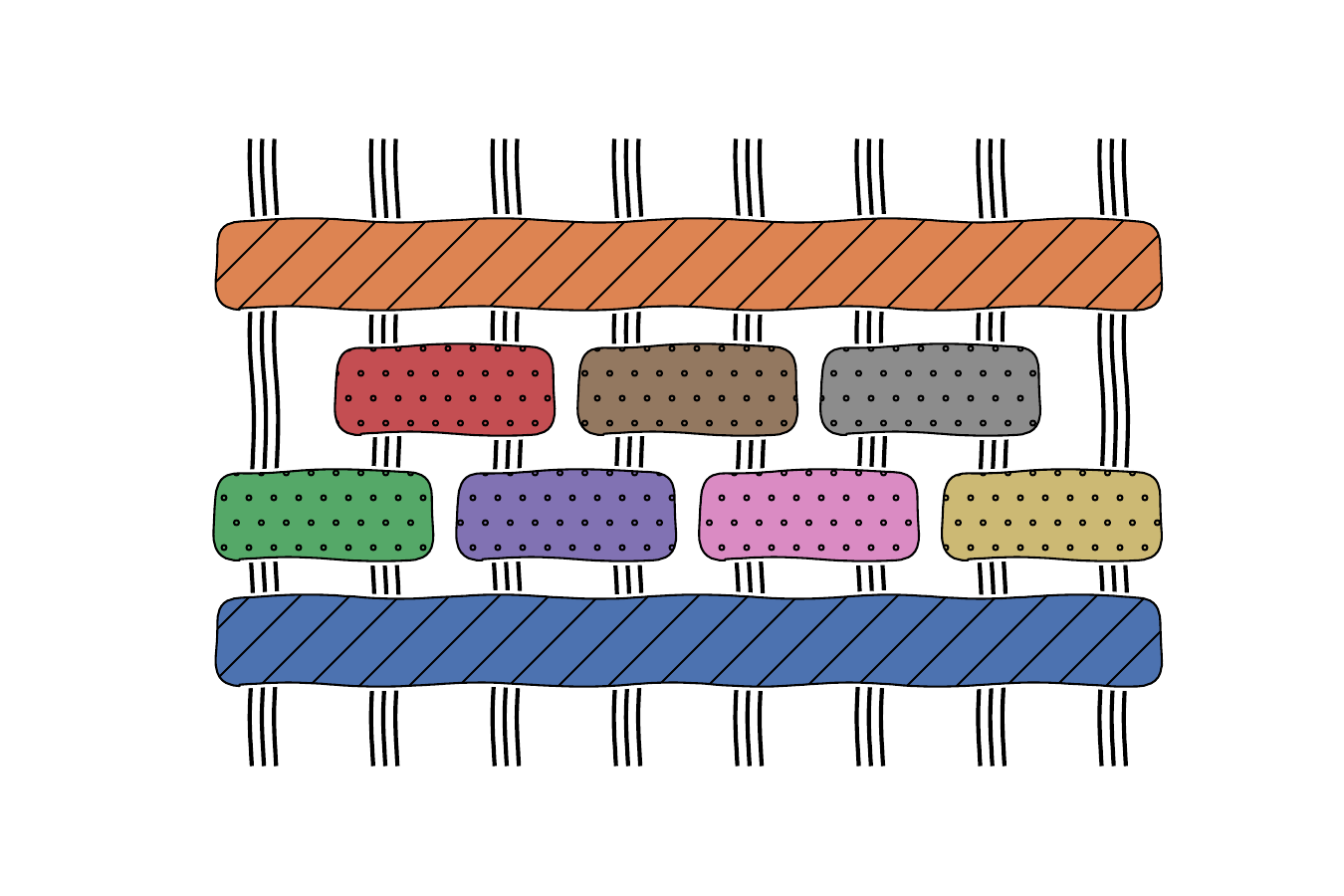}
    \caption{The strong gluing lemma can be applied iteratively as described in Corollary~\ref{corollary:gluing_lemma}. We first apply a strong 2-design on all qubits. Then we apply two layers of strong $k$-designs on patches of $2\xi$ qubits. We end with another strong 2-design on all qubits. The procedure results in a strong $k$-design on all qubits.}
    \label{fig:gluing_iteratively}
\end{subfigure}
\end{figure}

\Paragraph{Gluing lemma for designs}
Our results build on the strong gluing lemma, which allows combination of two unitary designs with sufficient overlap into one larger unitary design, provided access to strong unitary $2$-designs, as illustrated in Fig.~\ref{fig:glueing_illustration}. 
\begin{lemma}[Gluing strong random unitaries, Lemma~1 in~\cite{schuster2025strong}\footnote{
Schuster et al.\ state the lemma without the additional register $d$ which is not touched by the gluing process, while this additional register is essential in their iterative arguments that result in their main results. Note that Lemma~\ref{lemma:strong_gluing} does not trivially follow from the statement without this additional register stated in Lemma~1 of~\cite{schuster2025strong}. That is because one cannot just ``pull out'' a 2-design out of another 2-design, in a context in which an adversary can sample the unitary more than 2 times. The proof that Schuster et al.\ provide for their lemma, however, does include an additional register $d$, so Lemma~\ref{lemma:strong_gluing}, and therefore all results in~\cite{schuster2025strong}, still hold.
}]
\label{lemma:strong_gluing}
Given four subsystems, $a,b,c,d$ where $a,b,c$ are at least size $\xi$, consider the unitary ensemble $U_1 = D_{abcd}U_{ab}U_{bc}C_{abcd}$, where $C_{abcd}$ and $D_{abcd}$ are strong $\varepsilon_2$-approximate unitary $2$-designs and $U_{ab}$, $U_{bc}$ are strong $\varepsilon_{ab}$- and $\varepsilon_{bc}$-approximate unitary $k$-designs on their respective subsystems. Consider another ensemble $U_2 = D_{abcd}U_{abc}C_{abcd}$, where $U_{abc}$ is Haar random. Now any experiment that samples a unitary, its inverse, transverse or conjugate $k$ times, cannot distinguish $U_1$ from $U_2$ up to 
$\varepsilon = \varepsilon_{ab} + \varepsilon_{bc} + \order{k^2/2^{(3/16)\xi}}+ \order{k^{5/8}\varepsilon_2^{1/8}}$ in trace distance.
\end{lemma}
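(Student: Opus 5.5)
The plan is to follow the path-recording / purification strategy of Schuster et al., keeping the untouched register $d$ inside every step. First I strip off the approximation errors by hybrid arguments. Replacing $U_{ab}$ and $U_{bc}$ by exact strong $k$-designs changes the adversary's output state by at most $\varepsilon_{ab}+\varepsilon_{bc}$. The reason is that any experiment querying $U_1$ at most $k$ times can be simulated by an experiment querying $U_{ab}$ at most $k$ times, because the adversary can sample $C_{abcd},D_{abcd},U_{bc}$ itself and absorb them into the $W_i$. Since the queries include $U,U^\dagger,U^T,\bar U$, the corresponding products of the factors are all implementable, so the simulation goes through. Exact strong $k$-designs agree with Haar on all $k$-query experiments, so I may take $U_{ab},U_{bc}$ Haar random.

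For the 2-designs $C,D$, I cannot do the same swap directly, since the adversary queries them up to $k>2$ times. This is exactly the issue flagged in the footnote. I will therefore not replace them by Haar. Instead, I will only use their 2-design property through a second-moment bound on a specific "bad" projector, which produces the $k^{5/8}\varepsilon_2^{1/8}$ term via a chain of Cauchy--Schwarz / Hölder steps.

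With $U_{ab},U_{bc}$ Haar, I use the strong path-recording framework. I purify each Haar unitary on $ab$ (resp. $bc$) into a relation-state register recording input/output pairs for forward and inverse queries. Conditioned on the records being "distinct" (no collisions among the $\le k$ recorded strings on the shared subsystem $b$), the joint action of $U_{ab}U_{bc}$ matches the path-recording isometry of a single Haar unitary on $abc$. Along the way, the system $d$ is just carried as part of the adversary's workspace.

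The difference between $U_1$ and $U_2$ is thus controlled by the weight of the adversary's state on the collision subspace. On that subspace some recorded $b$-string, or $a$/$c$ string used by an inverse query, coincides with another. This is where $C$ and $D$ are used. Any fixed query state is sandwiched by a strong 2-design, so each pair of records looks like two independent random strings on $\ge\xi$ qubits. For exact 2-designs, the collision probability per pair is then $O(2^{-\xi})$, and each weight is evaluated through the second moments of $C$ and $D$ acting on $abcd$, including $d$.

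A union bound over the $O(k^2)$ pairs gives the collision weight. Converting the weight to trace distance with the gentle-measurement-type losses in the path-recording argument gives $O(k^2/2^{(3/16)\xi})$. For the $\varepsilon_2$-approximate case, I bound the deviation of the second-moment quantity by $\varepsilon_2$ per use. After the repeated square-root steps, and summing over $k$ queries, this yields $O(k^{5/8}\varepsilon_2^{1/8})$.

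The main obstacle I expect is this last step. One must show that the collision projector's expectation depends only on the second moment of $C,D$, even though the full experiment has $k$ queries. Conjugate and transpose queries make $C$ appear as both $C$ and $\bar C$, so I will first reduce these to $U$ and $U^\dagger$ queries using the standard trick of expressing them with auxiliary EPR pairs. Then I will bound the collision weight by a sum of two-query terms, each an average of a degree-$(2,2)$ polynomial in $C$ over the ensemble.
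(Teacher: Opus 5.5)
The paper does not prove this lemma. It imports it from Schuster et al.~\cite{schuster2025strong} and only remarks that their argument already carries the register $d$. Your opening hybrid is fine. Each query to $DU_{ab}U_{bc}C$ (or its inverse, transpose or conjugate) uses exactly one query to $U_{ab}$, resp.\ $U_{bc}$; the other factors can be sampled and absorbed into the $W_i$; and averaging costs $\varepsilon_{ab}+\varepsilon_{bc}$. You are also right not to trade $C,D$ for Haar. But all of the lemma's content lies in the step you flag as the main obstacle, and the mechanism you propose for it does not work.

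\textbf{The pair terms are not second moments.} The weight of a pair projector $\Pi_{ij}$ is evaluated on the adversary-plus-records state just before the relevant query. That state contains $C$ and $D$ once for every earlier query. Bounding $\Pi_{\mathrm{bad}}\le\sum_{i<j}\Pi_{ij}$ does not lower this degree, so the pair terms are not degree-$(2,2)$ averages over a $C$-independent state, and the $2$-design property cannot be applied to them.

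\textbf{The natural fix is blocked by $d$.} The obvious route to such a reduction is to evaluate the bad weight in the ideal hybrid, where the Haar unitary absorbs $C$ and $D$ so that they act only as basis changes on the record registers. Here that fails because $D_{abcd}(U_{abc}\otimes\one_d)C_{abcd}$ is not Haar on $abcd$. So even in the ideal world the state depends on $C,D$ through all $k$ queries (cf.\ the footnote).

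\textbf{The records are not independent.} The claim that ``each pair of records looks like two independent random strings'' is false for strong adaptive queries. The adversary can feed a forward output, modified on a few qubits, back into $U^\dagger$. The middle-frame input of that inverse query is then strongly correlated with a forward record. Exact matches are legitimate undo steps. What separates $V_{ab}V_{bc}$ from $V_{abc}$ are partial matches: agreement on $ab$ but not on $c$, which is the lightcone attack. These are rare only because $D$ scrambles the adversary's intermediate operation $O$ into $D^\dagger O D$. When $O$ depends on earlier outputs, bounding this is the same adaptivity problem as above.

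\textbf{Two smaller points.} First, the EPR trick does not operationally reduce $U^T,\bar U$ queries to $U,U^\dagger$ queries. A one-query teleported transpose succeeds only with probability exponentially small in the number of qubits, so the path-recording purification itself has to be extended to transpose and conjugate queries. Second, the exponents $3/16$ and $k^{5/8}\varepsilon_2^{1/8}$ are read off the statement rather than derived from your union-bound and gentle-measurement steps.
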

One can iteratively apply Lemma~\ref{lemma:strong_gluing} to construct strong $k$-designs on larger systems from smaller building blocks, resulting in the following procedure illustrated in Fig.~\ref{fig:gluing_iteratively}.
\begin{corollary}[Theorem 5 from \cite{schuster2025strong}]\label{corollary:gluing_lemma}
    We divide our $n$ qubits into patches of size $\xi$, where we assume that $m = n/\xi$ is an integer. We define the following protocol:
    \begin{enumerate}
        \item We apply an $\varepsilon/3$-approximate strong 2-design on all qubits
        \item We apply an $\varepsilon/(3n)$-approximate strong $k$-designs on the patches $(2i-1, 2i)$ of size $2\xi$, for every $1\leq i \leq m/2$
        \item We apply an $\varepsilon/(3n)$-approximate strong $k$-designs on the patches $(2i, 2i+1)$ of size $2\xi$, for every $1\leq i \leq m/2$
        \item We apply an $\varepsilon/3$-approximate strong 2-design on all qubits
    \end{enumerate}
    The ensemble resulting from this protocol forms a strong $\varepsilon$-approximate unitary $k$-design when $\xi \geq \frac{16}{3} \log_2(nk^2/\varepsilon) + \order{1}$.
\end{corollary}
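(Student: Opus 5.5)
The plan is to prove Corollary~\ref{corollary:gluing_lemma} by applying Lemma~\ref{lemma:strong_gluing} repeatedly, absorbing one patch-pair $k$-design at a time. The main difficulty is that the lemma requires the two glued $k$-designs to be sandwiched directly between global strong 2-designs. In the protocol, however, two brickwork layers sit between a single pair of 2-designs. I will handle this with a hybrid argument that works sequentially along the chain of patches.

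\textbf{Hybrid sequence.} Label the patches $1,\dots,m$. Define hybrid $H_j$ as the ensemble $D\,V_j\,(\text{remaining brickwork gates})\,C$. Here $C$ and $D$ are the global $\varepsilon/3$-approximate 2-designs, and $V_j$ is a Haar-random unitary on patches $1,\dots,j+1$ that replaces all $k$-design gates supported there.

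Going from $H_j$ to $H_{j+1}$ absorbs the next brick, say on patches $(j+1,j+2)$. I apply Lemma~\ref{lemma:strong_gluing} with the following identifications:
\begin{itemize}
\item $a$ = patches $1,\dots,j$;
\item $b$ = patch $j+1$;
\item $c$ = patch $j+2$;
\item $d$ = all remaining patches.
\end{itemize}
The Haar unitary $V_j$ on $ab$ is an exact $k$-design, so it plays the role of $U_{ab}$. The brick on $bc$ plays the role of $U_{bc}$.

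\textbf{Commuting the remaining gates.} The other brickwork gates act on $d$, or on $c$ together with $d$. I need them to commute past the relevant ordering. The correct ordering comes from the two-layer structure: each brick of one layer overlaps with bricks of the other layer only at its two ends. So I can order the absorption left to right, alternating between layers.

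The gates that do not touch $abc$ act only on $d$. I will treat them as part of the adversary's processing. This is legitimate because the adversary's $W_i$ may act on everything. The subtlety is that these gates sit inside the 2-designs rather than outside them. I will handle this using the invariance of exact Haar/2-design statistics under a fixed unitary on $d$: conditioning on the other bricks, $D\,(\cdot\otimes X_d)\,C$ must still be a pair of strong 2-designs. For an exact 2-design this holds by right and left invariance. For approximate 2-designs I will rely on the fact that the lemma's proof bounds only the 2-design error, which is stable under fixed multiplication.

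\textbf{Error accounting.} Each application of the lemma costs
\[
\varepsilon/(3n) + \order{k^2 2^{-3\xi/16}} + \order{k^{5/8}\varepsilon_2^{1/8}}.
\]
There are at most $m \le n$ steps.

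\textbf{Fixing the 2-design term.} The $\varepsilon_2$ term must not be paid $m$ times. To avoid this, I first replace $C$ and $D$ by exact strong 2-designs at the start, at cost $\varepsilon/3$ each via the triangle inequality. Then $\varepsilon_2 = 0$ throughout all hybrids.

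\textbf{Totals.} Summing the remaining terms gives at most
\[
n\cdot\varepsilon/(3n) + \order{n k^2 2^{-3\xi/16}}.
\]
The second term is at most $\varepsilon/3$ once $\xi \ge \tfrac{16}{3}\log_2(nk^2/\varepsilon)+\order{1}$.

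\textbf{Conclusion.} At the end, $V_m$ is Haar on all $n$ qubits. Hence $D V_m C$ is exactly Haar, and the total distance is at most $\varepsilon$.

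I expect the main obstacle to be justifying the commutation and ordering of the brickwork gates inside the sandwich. The cleanest route I will try first is to absorb the exact 2-designs' invariance so that the untouched gates become part of $d$.
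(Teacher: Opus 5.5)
Your hybrid architecture matches what the paper indicates. It defers to Schuster et al.\ and notes that the corollary follows by iterating Lemma~\ref{lemma:strong_gluing}, with the spectator register $d$ being essential. The ordering issue you single out as the main obstacle is actually harmless. In the alternating left-to-right order, the current Haar block $V_j$ and the next brick are adjacent in the operator product. Every remaining brick, including the one that meets $c$, lies entirely to the left or entirely to the right of that pair. After conditioning on those bricks, you can absorb them into $D$ or $C$. For a fixed unitary $Y$, $YC$ is again a strong $\varepsilon_2$-approximate 2-design directly from the definition: any two-query experiment on $YC$ is a two-query experiment on $C$, and $YH$ is Haar. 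Calling these gates ``adversary processing'' is not right, since they sit inside the queried unitary. The conditioning argument is what works, and it needs nothing from the lemma's proof.

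The genuine gap is the step that disposes of the 2-design error. You cannot replace $C$ and $D$ by exact strong 2-designs ``at cost $\varepsilon/3$ each via the triangle inequality.'' The $\varepsilon/3$ guarantee controls only experiments that query $C$ at most twice. A $k$-query experiment on $DLC$ queries $C$ and $D$ $k$ times each, and nothing in the hypothesis bounds how well such an experiment separates $C$ from an exact 2-design. This is exactly the ``pulling a 2-design out'' issue flagged in the footnote to Lemma~\ref{lemma:strong_gluing}. Even granting the replacement, your budget $2\cdot\varepsilon/3+\varepsilon/3+\varepsilon/3$ exceeds $\varepsilon$.

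Without the replacement, each black-box application of the lemma charges $\mathcal{O}(k^{5/8}\varepsilon_2^{1/8})$. For $\varepsilon_2=\varepsilon/3$ and small $\varepsilon$, a single step already dwarfs $\varepsilon$, and the chain sums to $\mathcal{O}(m k^{5/8}\varepsilon^{1/8})$. So your argument proves the corollary only under the stronger hypothesis $\varepsilon_2 \lesssim \bigl(\varepsilon/(m k^{5/8})\bigr)^{8}$ on the outer 2-designs. That is cheap in depth, but it is not what is stated. Reaching the stated $\varepsilon/3$ would require opening up the proof of the gluing lemma so that the 2-design error enters once, and linearly, for the whole chain. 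No black-box iteration of the lemma can give this.
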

To apply Corollary~\ref{corollary:gluing_lemma}, it remains to construct the small strong $k$-designs that are combined by the gluing procedure. This is achieved by the following lemma.
\begin{lemma}[Lemma 3 in \cite{schuster2025strong}]
\label{lem:1d-strong-kdesigns}
    1D random circuits on $n$ qubits form strong $\varepsilon$-approximate unitary $k$-designs in depth
    $d = \order{\log^7(k) \left(nk + \log(1/\varepsilon)\right)}$
    for any $\varepsilon \geq 2k^2/2^n$.
\end{lemma}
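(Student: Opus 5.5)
The statement is Lemma~3 of~\cite{schuster2025strong}, so the plan is to reconstruct its proof rather than invoke it directly. It combines three ingredients: the strong gluing lemma (Lemma~\ref{lemma:strong_gluing}), a base case in which small patches of random circuits already form strong $k$-designs, and an inductive doubling of the patch size.

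\textbf{Base case.} Fix a patch size $\xi_0 = \Theta(\log(nk^2/\varepsilon))$. We need a 1D random circuit on $2\xi_0$ qubits to form a strong $k$-design with error $\varepsilon/\poly(n)$. For this I would use the known spectral-gap results for local random circuits, e.g.\ the $\order{\log^7 k}$ bound of Haferkamp / Chen et al. These give a gap of $\Omega(1/(m\log^7 k))$ for the $k$-th moment operator of brickwork circuits on $m$ qubits. The $k$-th moment operator of $U$ together with its conjugate and transpose is still a projector-type object. Its convergence to the Haar moment in diamond norm then costs depth $\order{\log^7(k)(mk + \log(1/\varepsilon))}$, where the factor $mk$ comes from the $2^{\order{mk}}$ conversion between operator norm and diamond norm.

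The key point is that a strong design only needs control of the mixed moment $\E[U^{\otimes k}\otimes \bar U^{\otimes k}]$ together with the adjoint queries. An exact Haar moment match in the relative-error sense yields the strong guarantee, because queries to $U^\dagger$ and $U^T$ are then absorbed by the same twirl. I would first establish this relative-error claim for small blocks, since there the dimensional factors are harmless.

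\textbf{Induction.} Suppose 1D circuits of depth $d_m$ on $m$ qubits give strong $k$-designs with error $\varepsilon_m$. Apply Corollary~\ref{corollary:gluing_lemma} with patches of size $\xi \sim \log(nk^2/\varepsilon)$. The required strong 2-designs on all qubits are themselves 1D random circuits, which by the $k=2$ case form strong 2-designs in depth $\order{n + \log(1/\varepsilon)}$. This is a linear lightcone bound, consistent with the $nk$ term. The depth-$\order{\xi k\log^7 k}$ patch designs come from the base case. The resulting circuit has total depth $\order{\log^7(k)(nk+\log(1/\varepsilon))}$, and it is a composition of 1D brickwork layers.

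To conclude that \emph{random} 1D circuits of that depth form the design, rather than just this specific structured ensemble, I would use a standard monotonicity argument. Appending further independent random layers to a circuit whose moment operator is a design does not increase the distance. The structured blocks are each subsumed, as a mixture, by random brickwork: each brickwork layer restricted to a patch contains the patch circuit. A cleaner route is to bound the strong moment operator of the full random circuit by the product of block moment operators in the semidefinite order, in the spirit of the patchwork argument.

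\textbf{Main obstacle.} The main obstacle is the error bookkeeping in Lemma~\ref{lemma:strong_gluing}: the term $\order{k^{5/8}\varepsilon_2^{1/8}}$ forces $\varepsilon_2 \le \varepsilon^8/k^5$. This only costs a constant factor inside $\log(1/\varepsilon)$, but it must be checked against the condition $\varepsilon \ge 2k^2/2^n$. That condition is exactly what makes $\xi \le n$ feasible, so the construction is non-vacuous.
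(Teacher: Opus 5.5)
\textbf{There is a genuine gap: the gluing induction fails, and it is also unnecessary.} The induction is what carries your argument from small blocks to all $n$ qubits and to the actual random-circuit ensemble. It does not go through, for three reasons.

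First, the $n$-qubit strong $2$-designs required by Corollary~\ref{corollary:gluing_lemma} are taken from ``the $k=2$ case'' of the lemma being proved. That is circular. If you instead prove the $n$-qubit $k=2$ case by your spectral-gap argument, the same argument gives every $k$ at once.

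Second, the glued ensemble is not the 1D random-circuit ensemble. Composing a strong design with independent unitaries before or after it is harmless: condition on them, and the adversary can simulate them. But a brickwork circuit is not of the form (independent)$\,\circ\,$(glued)$\,\circ\,$(independent), because its Haar gates straddling patch boundaries are interleaved in time with the patch gates. Your semidefinite-order alternative is the relative-error patchwork argument, and it cannot give strong security. The $\order{\log n}$-depth 1D patchwork satisfies exactly such an ordering, yet by Proposition~\ref{prop:lower_bound} it is not a strong $2$-design. This is also why Lemma~\ref{lemma:strong_gluing} needs its own proof.

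Third, $\varepsilon\ge 2k^2/2^n$ does not make the gluing feasible. It only gives $\log_2(nk^2/\varepsilon)\le n-1+\log_2 n$, so $\xi=\tfrac{16}{3}\log_2(nk^2/\varepsilon)+\order{1}$ can be about $\tfrac{16}{3}n>n$.

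Your base case contains the right mechanism but misstates it; stated correctly, it already proves the lemma on all $n$ qubits. Relative-error closeness is not what transfers to strong security (same counterexample). What transfers is operator-norm closeness of $M_\nu=\E_{U\sim\nu}[U^{\otimes k}\otimes\bar U^{\otimes k}]$ to $M_H$.

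For any experiment making $k$ queries to $U,U^T,\bar U,U^\dagger$, the quantity $\Tr[O\ketbra{\mathcal{A}^U_k}]$ has degree $k$ in the entries of $U$ and degree $k$ in those of $\bar U$. So it equals $\Tr[X\,U^{\otimes k}\otimes\bar U^{\otimes k}]$, with every entry of $X$ bounded by $\|O\|$. Hence $\|X\|_1\le 2^{3nk}$, and the bias is at most $2^{3nk}\|M_\nu-M_H\|_\infty$.

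Since $M_H M_\ell=M_\ell M_H=M_H$ for every layer, $M_{\nu_d}-M_H$ factorizes as $\prod_\ell (M_\ell-M_H)$. Therefore $\|M_{\nu_d}-M_H\|_\infty\le(1-\Delta)^{\lfloor d/2\rfloor}$, where $\Delta$ is the $n$-independent gap of two brickwork layers, with $\Delta^{-1}=\order{\log^7 k}$ in the relevant regime. Your $\Omega(1/(m\log^7 k))$ is a per-gate gap; read per layer, it would give depth quadratic in $m$.

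Choosing $d=\order{\Delta^{-1}(nk+\log(1/\varepsilon))}$ then finishes the proof for the genuine brickwork ensemble. The $2^{\order{nk}}$ dimension factor costs only the $nk$ term already present in the target. So there is no reason to restrict to small blocks, and the gluing step should simply be dropped. The paper gives no proof of this lemma and imports it from Ref.~\cite{schuster2025strong}; the stated depth is exactly what this direct argument produces.
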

Combining Corollary~\ref{corollary:gluing_lemma} and Lemma~\ref{lem:1d-strong-kdesigns} leads to the following theorem.
\begin{theorem}[Theorem 1 in~\cite{schuster2025strong}]
\label{thm:all-to-all-designs}
    Strong $\varepsilon$-approximate unitary $k$-designs can be realized in the following circuit depths:
    \begin{enumerate}
        \item $d = \order{\log n + \log k \cdot \log\log(nk/\varepsilon)}$ using all-to-all circuits with $\torder{nk}$ auxiliary qubits.
        \item $d = \order{\log n + k \cdot \log\log(nk/\varepsilon)}$ using all-to-all circuits with $\torder{n}$ auxiliary qubits.
        \item $d = \order{k \cdot \operatorname{poly}\log k \cdot \log(n/\varepsilon) + \log n \cdot \log(n/\varepsilon)}$ for all-to-all random circuits of independent Haar-random two-qubit gates without auxiliary qubits.
    \end{enumerate}
\end{theorem}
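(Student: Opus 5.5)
The plan is to treat the three items separately. Each is an instance of Corollary~\ref{corollary:gluing_lemma}, with the patch size fixed at $\xi = \Theta(\log(nk/\varepsilon))$ so that the gluing error is absorbed. The small strong $k$-designs on patches are implemented in one of two ways. One is the 1D random circuits of Lemma~\ref{lem:1d-strong-kdesigns}, applied on a path inside the patch. The other is a small instance of the same construction recursively, using auxiliary qubits to parallelize. What remains in each item is to build the two global strong $2$-designs of steps~1 and~4 in low depth and to add up the errors.

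\textbf{Global 2-designs.} In every item I take the global strong $\varepsilon/3$-approximate $2$-design to be a depth-$\order{\log n}$ all-to-all construction, e.g.\ one obtained by gluing log-sized 1D strong $2$-designs. This gives the $\log n$ term in items~1 and~2. If no auxiliaries are available, I would instead glue recursively in a binary-tree fashion, at the cost of an extra $\log(n/\varepsilon)$ factor. One caveat: applying Corollary~\ref{corollary:gluing_lemma} to build the very $2$-design it consumes would be circular. To avoid this, I would first establish the $k=2$ case from an explicit construction, for instance random Clifford-like circuits of depth $\order{\log n}$ in all-to-all connectivity. If that is not available, I would run a recursion on $n$: a strong $2$-design on $n$ qubits from two on $n/2$ plus patches. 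That yields depth $T(n)=T(n/2)+\order{\log(n/\varepsilon)}$, hence $\order{\log n\log(n/\varepsilon)}$, which matches item~3.

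\textbf{Item 3.} Here no auxiliaries are used. The patches are implemented by Lemma~\ref{lem:1d-strong-kdesigns} with $\xi$ qubits and error $\varepsilon/(3n)$, which gives depth $\order{\log^7 k(\xi k+\log(n/\varepsilon))}=\order{k\,\mathrm{poly}\log k\,\log(n/\varepsilon)}$. Since the gates are all-to-all, the 1D path is free. The two $2$-designs contribute $\order{\log n\log(n/\varepsilon)}$ by the recursion above. Steps~2 and~3 run their patches in parallel, so the depths simply add, and the total error is $\varepsilon/3+\varepsilon/3+2m\cdot\varepsilon/(3n)\le\varepsilon$.

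\textbf{Items 1 and 2.} Here the point is to avoid the $k\log^7 k$ cost on patches of size $\xi$. I would use auxiliaries to implement a patch $k$-design as a \emph{product of sequentially glued} smaller designs that have been precomputed in parallel. For item~2, I would use $\torder{n}$ auxiliaries to realize each patch design via a depth-$\order{k}$ construction: $k$ layers of strong $2$-designs on $\xi$ qubits, each of depth $\order{\log\xi}=\order{\log\log(nk/\varepsilon)}$ using the $2$-design construction above with auxiliaries, interleaved with random permutations. For item~1, I would use $\torder{nk}$ auxiliaries to parallelize those $k$ layers into a tree of depth $\log k$, yielding $\order{\log k\log\log(nk/\varepsilon)}$.

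\textbf{Main obstacle.} The hard step is this patch-level construction for items~1 and~2. Specifically, I need to show that the auxiliary-assisted parallel composition still gives a \emph{strong} $k$-design on $\xi$ qubits with error $\varepsilon/(3n)$. That requires a strong-design analogue of the path-recording/permutation-twirl argument, with the auxiliaries uncomputed so that inverse queries are handled. If that fails, I would fall back on Lemma~\ref{lem:1d-strong-kdesigns} inside patches of size $\xi$. This gives depth roughly $k\,\mathrm{poly}\log k\log\log$, which is weaker than the stated bound, so the claimed bounds would rest on the auxiliary-assisted construction going through.
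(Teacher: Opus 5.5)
\textbf{Gap 1: the patch designs for items~1 and~2.} Your gluing skeleton is right for all three items. For item~3 the patch step is exactly the combination the paper cites: Lemma~\ref{lem:1d-strong-kdesigns} on $\xi=\Theta(\log(nk/\varepsilon))$ qubits inside Corollary~\ref{corollary:gluing_lemma}. Items~1 and~2, however, are not proved. As you say yourself, they rest on a patch construction you have not established, and the one you sketch cannot work as stated. If the layers are Clifford-type 2-designs and the permutations act on qubits, the product never leaves the Clifford group, which is not an approximate $4$-design. More generally, no available result turns $k$ rounds of 2-designs into a strong $k$-design with error $k^2/2^{\Omega(\xi)}$. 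Nor can a sequential product of $k$ unitaries on the same $\xi$ qubits in general be compressed into a depth-$\order{\log k}$ tree by adding auxiliaries.

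The missing idea is to put all of the $k$-dependence into a \emph{classical reversible} subroutine, which is exactly what auxiliaries can parallelize. A natural choice, and the one the stated bounds point to, is a patch ensemble of the PFC/Ma--Huang type: random Clifford layers twirl a classical core (random binary phases $(-1)^{f(x)}$, and permutations) drawn from $2k$-wise independent families. With truly random $f$, the path-recording analysis makes such ensembles strong designs against $k$ queries with error $k^2/2^{\Omega(\xi)}$. $2k$-wise independence then suffices, because any $k$-query experiment is a degree-$(k,k)$ polynomial in the entries of $U$ and $\bar U$. The $k$ versus $\log k$ dichotomy, and likewise $\torder{n}$ versus $\torder{nk}$ auxiliaries, is then the sequential (Horner-style) versus tree-parallel evaluation of such a family. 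An example is a random polynomial of degree $<2k$ over $\mathrm{GF}(2^{\xi})$. Each field operation costs depth $\order{\log\xi}=\order{\log\log(nk/\varepsilon)}$ with auxiliaries, the tree uses roughly $k$ times more workspace per patch, and $f(x)$ is uncomputed afterwards.

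\textbf{Gap 2: the global strong 2-designs.} Lemma~\ref{lemma:strong_gluing} needs $C_{abcd}$ and $D_{abcd}$ on the \emph{whole} system. Any application of it that outputs an $n$-qubit strong 2-design therefore already consumes one. Your recursion ``from two on $n/2$ plus patches'' is still circular, and the $\order{\log n\log(n/\varepsilon)}$ term of item~3 is left unproved. The Clifford fallback does not help there, since item~3 concerns circuits of independent Haar-random two-qubit gates without auxiliaries. What is needed is a direct, non-gluing strong 2-design result such as Lemma~2 of Ref.~\cite{schuster2025strong} (cf.\ the remark after Lemma~\ref{lem:UUbar_design}). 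A non-circular way to obtain one is the route of Section~\ref{sec:designs}: Pauli mixing via Proposition~\ref{prop:U_barU}, combined with a weak relative-error 2-design via Lemma~\ref{lem:convolution_weakandstrong}. For items~1 and~2 your fallback is fine in principle, since exact 2-designs are automatically strong: the output of any $2$-query experiment is a degree-$(2,2)$ polynomial in the entries of $U$ and $\bar U$. You must still point to a concrete exact 2-design of depth $\order{\log n}$ that uses only $\torder{n}$ auxiliaries, rather than a generic random Clifford. The near-linear-size, logarithmic-depth construction of Cleve, Leung, Liu and Wang is one such example.
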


\Paragraph{Gluing lemma for PRUs} 
An analogous gluing procedure to Corollary~\ref{corollary:gluing_lemma} can be used to construct strong PRUs.
However, one needs to be careful to choose the parameters $\varepsilon$ and $\xi$ such that the ensemble is secure against polynomially bounded adversaries.
We summarize this in the following corollary.
\begin{corollary}[Theorem 6 in \cite{schuster2025strong}]\label{corollary:PRU_gluing}
    Consider the procedure from Corollary~\ref{corollary:gluing_lemma}. For Steps 1 and 4, we choose $\varepsilon$ smaller than any polynomial in $n$, and in Steps 2 and 3 we replace the strong designs by strong PRUs with $\poly(n)$-time security on $2\xi$ qubits. If we choose $\xi = \order{\log^{1+\delta}(n)}$ for any fixed $\delta>0$, then the procedure results in a strong~PRU.
\end{corollary}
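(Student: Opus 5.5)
The plan is to check that the gluing procedure of Corollary~\ref{corollary:gluing_lemma} stays secure against polynomial-time adversaries when the small patches are pseudorandom rather than designs. I would argue by a hybrid argument. Every distinguishing advantage that arises must be negligible, and the patch size $\xi=\order{\log^{1+\delta}(n)}$ will be chosen so that the information-theoretic gluing error and the cryptographic error are both negligible.

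\textbf{Hybrid 0} is the actual construction. The outer layers are strong $\varepsilon$-approximate 2-designs $C$ and $D$ with $\varepsilon=\negl(n)$. The two inner layers are products of strong PRUs on patches of $2\xi$ qubits, each secure against $\poly(n)$-time adversaries.

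\textbf{Hybrids $1,\dots,2m$} replace the patch PRUs one at a time by Haar-random unitaries on $2\xi$ qubits. Two consecutive hybrids differ in a single patch. A $\poly(n)$-time distinguisher between them yields a $\poly(n)$-time adversary against that patch PRU, because the reduction can do the following efficiently:
\begin{enumerate}
\item sample the outer 2-designs, which are efficiently implementable;
\item simulate the remaining patches, using either PRU keys or Haar unitaries; Haar unitaries are simulated with an efficient strong $t$-design on $2\xi=\polylog(n)$ qubits, with $t$ the polynomial query bound, or with a path-recording simulation;
\item answer $U$, $U^\dagger$, $U^T$, $\bar U$ queries by composing the corresponding forward, inverse, transpose or conjugate queries to the challenge patch.
\end{enumerate}
Transpose and conjugate queries are available because the patch primitive is strong in the sense defined above. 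Since the security is $\poly(n)$-time, each step costs $\negl(n)$, and $2m\le 2n$ steps keep the total negligible.

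\textbf{Final hybrid.} All patches are now exactly Haar. I would apply the iterated strong gluing Lemma~\ref{lemma:strong_gluing}, as in Corollary~\ref{corollary:gluing_lemma} with $\varepsilon_{ab}=\varepsilon_{bc}=0$, for $k=\poly(n)$ queries. This gives indistinguishability from $D\,U_{\mathrm{Haar}}\,C=U_{\mathrm{Haar}}$ up to
\[
\order{m\,k^2 2^{-3\xi/16}} + \order{k^{5/8}\varepsilon^{1/8}} .
\]
With $\xi=\log^{1+\delta}(n)$, the first term is $2^{-\Omega(\log^{1+\delta} n)}\poly(n)$, which is negligible. The second term is negligible since $\varepsilon$ is. This bound is information-theoretic, so it holds against all adversaries and in particular efficient ones. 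Efficiency of the overall construction follows because every component is polynomial-time.

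\textbf{Main obstacle.} The delicate step is the hybrid reduction on patches of only $\polylog(n)$ qubits. The patch PRU must remain secure against adversaries running in time $\poly(n)$, which is quasi-polynomial in its own size $2\xi$. This is exactly where sub-exponential LWE hardness enters: with security parameter $\lambda\sim\xi=\log^{1+\delta}n$, sub-exponential hardness gives security against time $2^{\lambda^{c}}$. Choosing $\delta$ with $(1+\delta)c>1$ makes this exceed every $\poly(n)$ and yields advantage $\negl(n)$. I would first verify this parameter matching, then handle the simulation of the other Haar patches, using strong designs on $\polylog$ qubits, which have $\poly(n)$ depth.
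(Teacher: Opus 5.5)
Your proposal is correct and follows essentially the paper's route. You replace the patch PRUs by Haar unitaries, which no $\poly(n)$-time adversary can notice, and then invoke Corollary~\ref{corollary:gluing_lemma} with $\xi=\log^{1+\delta}(n)$ so that the gluing error and the 2-design term are negligible. Your explicit hybrid, with efficient $t$-design simulation of the already-replaced patches, and your use of $k=\poly(n)$ queries make rigorous what the paper compresses into one sentence.

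The sub-exponential LWE parameter matching in your last paragraph is not needed here. Security of the $2\xi$-qubit PRUs against $\poly(n)$-time adversaries is a hypothesis of the corollary; the paper does that matching only later, in Theorem~\ref{theorem:pru}. So no extra constraint on $\delta$ arises.
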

\begin{proof}
    From the perspective of any polynomially-bounded adversary, the $2\xi$-sized strong PRUs are indistinguishable from Haar random. In particular, they are indistinguishable from $k$-designs for any $k$. We use Corollary~\ref{corollary:gluing_lemma} to see that, if we choose $\xi = \log^{1 + \delta}(n)$, that then the ensemble results in a strong $k$-design for $k = \order{2^{\log^{1+\delta}(n)}}$ and $1/\varepsilon = \order{2^{\log^{1+\delta}(n)}}$. These functions grow faster than any polynomial, so we proved the corollary.
\end{proof}
To complete the construction, we again need building blocks for small strong PRUs, which can then be glued together. These are provided by the following lemma.
\begin{lemma}[Appendix E.3 in~\cite{schuster2025strong}]\label{lemma:PRU_buildingblocks}
    Assuming the sub-exponential post-quantum hardness of LWE, there exist sub-exponentially secure, auxiliary-free strong PRUs computable in depth $\poly(n)$ with all-to-all circuits, without auxiliary qubits.
\end{lemma}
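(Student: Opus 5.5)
The plan is to instantiate the strong PRU construction of Ma and Huang~\cite{ma2025construct} with cryptographic primitives derived from LWE, and then to compile every component into an auxiliary-free circuit. Recall that the construction in~\cite{ma2025construct} has the form $U = C^\dagger\, P F\, C'$ (up to the exact ordering used there for the strong variant). Here $C, C'$ are random Cliffords, or more generally random elements of a $2$-design that is implementable without auxiliaries. $P$ is a permutation of the computational basis, and $F$ is a diagonal $\pm 1$ phase $\ket{x}\mapsto(-1)^{f(x)}\ket{x}$. Ma and Huang show that when $P$ and $f$ are truly random, this ensemble is statistically indistinguishable from Haar in any experiment making $t$ forward/inverse queries, up to error $\order{t^2/2^{n/2}}$ or similar. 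Their argument applies the path-recording framework directly to adversaries with access to $U$ and $U^\dagger$. Queries to $U^T$ and $\bar U$ are handled in the same framework, as in~\cite{schuster2025strong}.

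First, I will run a hybrid argument. I replace the random permutation $P$ by a quantum-secure pseudorandom permutation, for example a Luby--Rackoff/Feistel network built from a PRF, and $f$ by a quantum-secure PRF. Both are obtained from LWE via standard constructions. Because the adversary can query both $U$ and $U^\dagger$, the PRP must be secure against inverse queries, i.e.\ a strong PRP. For quantum adversaries this follows for $\omega(1)$ Feistel rounds under a quantum-secure PRF. Sub-exponential hardness of LWE gives sub-exponentially secure PRFs, and hence sub-exponentially secure strong PRUs, since the statistical term $\order{t^2/2^{\Omega(n)}}$ is already exponentially small. Sub-exponential security also lets us take the security parameter to be $\lambda = n^{c}$ for small $c$, which keeps the key and workspace small. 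This will matter below.

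Second, I will show that each component has a polynomial-depth, all-to-all circuit with no auxiliary qubits.
\begin{itemize}
\item Cliffords on $n$ qubits are auxiliary-free in depth $\order{n}$ by standard synthesis.
\item The phase $F$ and each Feistel round will be implemented through phase kickback. The round map $x_L \mapsto x_L \oplus g(x_R)$ is realised as Hadamards on $x_L$, followed by the diagonal phase $(-1)^{\langle y, g(x_R)\rangle}$, followed by Hadamards. This reduces everything to diagonal phases $(-1)^{h(z)}$ with $h$ a poly-time classical function.
\item Such a phase is computed by a reversible classical circuit that writes the bit $h(z)$ into workspace, applies $Z$, and uncomputes. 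The workspace is supplied by \emph{dirty borrowed} qubits of the system itself, using the Barenco-style toggling trick: each Toffoli-based subroutine is applied twice so that the unknown contents of borrowed qubits cancel.
\item To guarantee enough borrowed qubits, I will use an unbalanced Feistel/PRF structure. There, each round function reads only a block of $\poly(\lambda)\ll n$ qubits, so the remaining qubits can serve as workspace. If the workspace is still insufficient, I will apply the construction on $n-m$ qubits, borrowing the rest as dirty workspace, and compose several such layers with Cliffords interleaved.
\end{itemize}

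The main obstacle is this auxiliary-free compilation. Borrowed-ancilla techniques are clean for Toffoli networks, but the PRF evaluation must act on a register disjoint from the qubits it borrows. It must also preserve the exact permutation/phase structure that the Ma--Huang security proof requires. The first thing I would try is to verify that the unbalanced-Feistel PRP remains a strong quantum PRP, and that its round functions fit within the borrowed workspace. If that fails, the fallback is to prove security directly for a modified ensemble $C^\dagger P F C'$ in which the permutation and phase each act on a subset of qubits and Cliffords mix between the subsets. This fallback would require re-running the path-recording analysis.
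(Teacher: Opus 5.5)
The paper does not prove this lemma; it imports it from Appendix~E.3 of \cite{schuster2025strong}, so your sketch has to stand on its own. Your sub-exponential-security bookkeeping is fine. The gap is in the one claim that goes beyond Ma--Huang~\cite{ma2025construct}, namely auxiliary-freeness.

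\textbf{Dirty-workspace evaluation.} Your dirty-workspace step is not valid for an arbitrary polynomial-time $h$. Barenco-style toggling works for multi-controlled Toffolis because each Toffoli there has at most one borrowed control, so garbage only ever enters linearly. In a general circuit, an intermediate value written into a borrowed bit is $r\oplus w$. When two such values feed a later AND gate you get $(r_1\oplus w_1)(r_2\oplus w_2)$. Removing the cross terms $r_1w_2\oplus w_1r_2\oplus r_1r_2$ requires re-running both sub-computations in a four-fold compute/uncompute pattern. This is the Ben-Or--Cleve/Barrington transparent-computation recursion, and its cost is $4^{\mathrm{depth}}$. So ``apply each subroutine twice'' gives polynomial cost for logarithmic-depth ($\mathsf{NC}^1$) functions, but not for a general polynomial-time $h$.

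You therefore need to instantiate with LWE-based quantum-secure PRFs of that depth (e.g.\ the Banerjee--Peikert--Rosen functions, shown quantum-secure by Zhandry) and evaluate them with transparent programs. For the same reason, the permutation must be a composition of in-place steps $x\mapsto x\oplus g(x)$ with $g$ shallow and unchanged by the step, as in a Feistel or swap-or-not round. The clean-ancilla recipe ``compute $P(x)$, swap, uncompute with $P^{-1}$'' does not survive dirty workspace. A smaller point: ``$\omega(1)$-round Feistel is a strong quantum PRP'' is not something you can cite, since four rounds are already broken by quantum inverse queries. The standard strong quantum PRP is Zhandry's construction from a quantum-secure PRF, and you would have to check that it has this in-place form.

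\textbf{Full-width components.} A component that reads all $n$ qubits has no qubit it can borrow. At minimum this is the phase $F$, which the Ma--Huang analysis needs to be pseudorandom on all of $\{0,1\}^n$. If you let $f$ read only a block, it is no longer pseudorandom, because inputs agreeing on the block get equal phases; the reduction to \cite{ma2025construct} then fails. Your fallback, smaller constructions interleaved with Cliffords, is exactly the statement that needs proof, and you leave it open.

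You do not need to redo path recording, though. Random Cliffords are exact $2$-designs, hence exact strong $2$-designs, because any experiment making two queries among $U,U^T,\bar U,U^\dagger$ has an output of degree $(2,2)$ in $(U,\bar U)$. So apply Lemma~\ref{lemma:strong_gluing} with $d$ empty, $|a|=|b|=|c|=n/3$ and $\varepsilon_2=0$, together with the hybrid argument of Corollary~\ref{corollary:PRU_gluing}. This shows that the following sequence is a strong PRU on all $n$ qubits: a Clifford, then a strong PRU on $a\cup b$ borrowing dirty workspace from $c$, then one on $b\cup c$ borrowing from $a$, then a Clifford. The extra error is $O(k^2 2^{-n/16})$, which is compatible with sub-exponential security.

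With these two repairs your outline would prove the lemma. As written, the auxiliary-free property, which is the actual content of the statement, is not established.
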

Note that the PRUs are sub-exponentially secure, meaning there exists some constant $\delta$ such that the PRU is secure against experiments that run in time~$2^{\order{n^\delta}}$.
This stronger notion of security is crucial for the application of the strong gluing lemma.
\\

\Paragraph{Lower bounds}
Schuster et al.~also provide lower bounds on the depth required to implement strong designs on one-dimensional and all-to-all connectivity.
These bounds naturally extend to the $D$-dimensional grids as well.
\begin{proposition}[Implicit from Proposition~1 in~\cite{schuster2025strong}]
\label{prop:lower_bound}
Any circuit ensemble on $n$ qubits on a $D$-dimensional grid that forms a strong $\varepsilon$-approximate unitary $2$-design for $\varepsilon < 1/4$ requires depth $d = \Omega(n^{1/D})$. The same lower bound holds for strong PRUs.
\end{proposition}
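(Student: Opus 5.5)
The plan is a lightcone argument that reduces to the one-dimensional/all-to-all argument behind Proposition~1 of~\cite{schuster2025strong}. The key quantity is the diameter of the grid. A $D$-dimensional grid with side length $n^{1/D}$ has graph diameter $D(n^{1/D}-1)=\Theta(n^{1/D})$ for constant $D$. If the circuit has depth $d$, the forward lightcone of any qubit $q$ under $U$ contains only qubits within graph distance $d$ of $q$. The same holds for $U^\dagger$, $U^T$ and $\bar U$, since these are obtained by reversing the gate order and/or conjugating gates, which preserves the lightcone structure. So I will pick two antipodal corner qubits $q_1,q_2$ at distance $\Theta(n^{1/D})$. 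If $2d$ is less than their distance, their lightcones, and their backward lightcones, are disjoint.

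Next I specify a distinguishing experiment that uses only two queries, including an inverse, and exploits locality. Following the construction in~\cite{schuster2025strong}, the experiment applies $U$, then a local operation on $q_1$ (for example, swapping it with an auxiliary qubit half of a Bell pair, or applying a Pauli), then $U^\dagger$. It then measures whether anything has reached $q_2$.

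For a shallow circuit, $U^\dagger P_{q_1} U$ is supported only on the joint lightcone of $q_1$, which excludes $q_2$. Hence the reduced state on $q_2$, together with an auxiliary qubit entangled with $q_2$ at the start, is exactly unchanged, and a swap test or Bell measurement succeeds with probability $1$. For Haar random $U$, $U^\dagger P_{q_1} U$ is a random traceless unitary operator whose weight on $q_2$ is essentially uniform. In that case the Bell pair on $q_2$ is destroyed with probability $1-O(2^{-n})$: the expected overlap $\E|\Tr(\cdot)|^2/4^n$ contributes only about $1/4$ fidelity. A two-outcome observable therefore distinguishes the two cases with bias at least about $3/4-o(1)>1/4$, which contradicts the design property for $\varepsilon<1/4$. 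The experiment uses at most $2$ queries, so it applies to strong $2$-designs.

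The main obstacle is computing the Haar value carefully, i.e.\ evaluating $\E_{U\sim H}$ of the Bell-fidelity on $(q_2,\text{aux})$ through second-moment Weingarten calculus. I will first try to use the identity that the operator-twirl $\E_U (U^\dagger P U)\otimes(U^\dagger P U)^\dagger$ maps to the depolarized form $\frac{1}{4^n-1}\sum_{Q\neq I}Q\otimes Q$. From that form the fidelity on a single qubit is $\approx 1/4$. Alternatively, I can cite the exact computation in Proposition~1 of~\cite{schuster2025strong}, noting that their proof only uses the existence of two qubits outside each other's depth-$d$ lightcone. That condition holds here whenever $d<\tfrac{D}{2}(n^{1/D}-1)$, so $d=\Omega(n^{1/D})$. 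For strong PRUs the same polynomial-time, two-query experiment gives a non-negligible distinguishing advantage, so the same bound follows.
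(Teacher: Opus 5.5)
Your argument is correct, but it takes a different route from the paper. The paper's proof uses the \emph{volume} of the lightcone, not the \emph{diameter} of the grid. It prepares $|0^n\rangle$, applies $U^\dagger Z_0 U$, and measures the normalized Hamming weight $M=\sum_s \frac{|s|}{n}|s\rangle\langle s|$. For Haar $U$ this expectation is at least $1/2$. For a depth-$d$ grid circuit, $U^\dagger Z_0 U$ is supported on at most $(2d)^D$ qubits, so it can flip at most a $(2d)^D/n$ fraction of them, and $\varepsilon<1/4$ forces $(2d)^D\ge n/4$.

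You instead fix two antipodal corners and entangle $q_2$ with an auxiliary qubit. You then use $\mathbb{E}_U (U^\dagger P U)^{\otimes 2}=\frac{1}{4^n-1}\sum_{Q\neq I}Q\otimes Q$ to show that the Haar Bell fidelity is $\frac{4^{n-1}-1}{4^n-1}\approx 1/4$, against exactly $1$ for shallow circuits.

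The paper's version has two advantages. It needs no auxiliary qubit. More importantly, because the Hamming weight averages over all qubits, it never needs a fixed qubit known in advance to lie outside the lightcone. That is why the same argument also gives the $\Omega(\log n)$ all-to-all bound, where lightcones depend on the sampled circuit.

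Your version has its own advantages. The gap is larger, about $3/4$, so it excludes any $\varepsilon$ below roughly $3/4$. It also gives an explicit bound $d\ge \mathrm{dist}(q_1,q_2)=D(n^{1/D}-1)$. You only need $d<\mathrm{dist}(q_1,q_2)$, not $2d<\mathrm{dist}(q_1,q_2)$, since only the support of $U^\dagger P_{q_1}U$ matters. The $D$-dependence of this bound stays nontrivial even when $n^{1/D}=O(1)$; on the hypercube it gives $\Omega(\log n)$, whereas the paper's $(2d)^D\ge n/4$ gives nothing there.

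Two small corrections. First, your claim that the Bell pair is destroyed with probability $1-O(2^{-n})$ contradicts your own, correct, fidelity value. The Bell test fails with probability about $3/4$, not $1-O(2^{-n})$. Second, your fallback description of Proposition~1 of Schuster et al.\ as a two-separated-qubits argument does not match it. That proof is the Hamming-weight/volume argument reproduced in the paper, so rely on your own twirl computation rather than that citation.
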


\begin{proof}
Consider the experiment measuring $M = \sum_{\mathbf{s}\in\{0,1\}^n} \frac{|\mathbf{s}|}{n}|\mathbf{s}\rangle\langle\mathbf{s}|$ on the state $U^\dagger Z_0 U |0^n\rangle$, where $Z_0$ is a single-qubit Pauli on the first qubit. For Haar-random $U$,
\begin{equation}
    \mathbb{E}_{U\sim H}\bigl[\langle 0^n|U^\dagger Z_0 U \cdot M \cdot U^\dagger Z_0 U|0^n\rangle\bigr] \geq \tfrac{1}{2},
\end{equation}
so any strong $\varepsilon$-approximate unitary $2$-design must achieve expectation value at least $\frac{1}{2} - \varepsilon$.

The operator $U^\dagger Z_0 U$ has support on at most $L$ qubits, where $L$ is the size of the lightcone of $U$ starting at the first qubit. A circuit $U$ with depth $d$ on a $D$-dimensional grid has a lightcone of at most $L \leq (2d)^D$ qubits. Therefore the fraction of qubits it can flip is at most $L/n$, giving
\begin{equation}
    \mathbb{E}_{U\sim\mathcal{E}}\bigl[\langle 0^n|U^\dagger Z_0 U \cdot M \cdot U^\dagger Z_0 U|0^n\rangle\bigr] \leq \frac{(2d)^D}{n}.
\end{equation}
For a strong $2$-design we need $(2d)^D/n \geq \frac12 - \varepsilon \geq \frac14$, giving $d = \Omega(n^{1/D})$. As the described experiment is efficient, the lower bound also holds for PRUs.
\end{proof}

\section{Optimal-depth strong designs}\label{sec:designs}

In this section we will construct an approximate strong unitary $k$-design in depth $d = \order{n^{1/D}}$.
The proof passes through the following steps.
First, in Lemma~\ref{lem:UUbar_design}, we construct a unitary design that is secure against any experiment that queries one copy of $U$ or $U^T$ and one copy of $U^{\dagger}$ or $U^*$.
Our construction relies on the underlying Cartesian product structure of the connectivity graph\footnote{Interestingly we find exactly the same depth overhead in these designs as is required in the routing protocol. The advantage of this approach is that the final circuit achieves the exact goal of being a design, instead of a subroutine that we can use to compile other circuits.} and is similar to the alternating construction introduced in~\cite{harrow2023approximate} (though our analysis applies to more general connectivity graphs).
Our second step, Theorem~\ref{theorem:2-designs_optimal}, is to note that the combining such a design with a weak design forms a strong design.
For the weak design, we use the low-depth $1$D construction of~\cite{schuster2025random}\footnote{This second step is likely not needed, but simplifies the proof (and anyway has subleading depth scaling).}.
Finally, in Theorem~\ref{theorem:gluing_lattice_result} we use the resulting strong $2$-design as a building block in the construction of Corollary~\ref{corollary:gluing_lemma}, obtaining strong $k$-design in optimal depth.

For the first step of our construction, we will need the following proposition from~\cite{schuster2025strong}.
\begin{proposition}[Proposition 2 in~\cite{schuster2025strong}]
\label{prop:U_barU}
Let $\mu$ be a unitary ensemble invariant under conjugation, transposition, and random single-qubit Pauli rotations at the input and output circuit layer. Then $\mu$ forms an approximate unitary $2$-design relative to any quantum experiment that queries one of $U$ or $U^T$ and one of $U^*$ or $U^\dagger$, with measurable error
\begin{equation}\label{eqn:TV-small-2-design}
    \varepsilon \leq \max_{Q\in \PP^*_n} \TV(\mathcal{E}_{\mu}(\cdot, Q), \mu_{\PP^*_n}),
\end{equation}
where $\mu_{\PP^*_n}$ is the uniform distribution over the set $\PP^*_n$ of non-identity Hermitian Pauli operators on $n$ qubits and $\mathcal{E}_{\nu}(\cdot, Q)$ is the distribution $\mathcal{E}_{\nu}(P,Q) = 2^{-2n} \EE_{U \sim \nu}|\tr(UQ U^\dagger P)|^2$.
\end{proposition}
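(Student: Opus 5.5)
The plan is to reduce everything to the second-moment operator of $\mu$ in the Pauli basis, show that this operator is a stochastic matrix on Pauli labels with entries $\mathcal{E}_\mu(P,Q)$, and then bound the adversary's bias by a total variation distance. The inequality~\eqref{eqn:TV-small-2-design} is then just the statement that the adversary can do no better than a convex combination of the rows $Q$.

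\textbf{Step 1 (reducing the query types).} Since $\mu$ is invariant under transposition and complex conjugation, $U^T$ and $U^*$ are distributed like some unitary drawn from $\mu$. So a query to $U^T$ (resp.\ $U^*$) can be replaced by a query to $U$ (resp.\ $U^\dagger$) of a relabelled ensemble with the same law. The Haar measure has the same invariances, so it suffices to treat experiments that query exactly one $U$ and one $U^\dagger$, in either order. The case where $U^\dagger$ comes first is symmetric, since the state before the first query is fixed.

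\textbf{Step 2 (writing the output).} Write the final state as $W_2 (U^\dagger\otimes\one) W_1 (U\otimes \one) W_0\ket{0}$. The intermediate unitary $W_1$ acts on the system plus auxiliaries, so expand it in the system Pauli basis: $W_1=\sum_{Q} Q\otimes B_Q$. Then
\begin{equation*}
(U^\dagger\otimes\one)W_1(U\otimes\one)=\sum_Q (U^\dagger Q U)\otimes B_Q .
\end{equation*}
The density matrix of the output is therefore quadratic in the operators $U^\dagger Q U$. The quantities to control are
\begin{equation*}
\EE_U\big[(U^\dagger QU)\otimes (U^\dagger Q'U)^{*}\big].
\end{equation*}
Expanding $U^\dagger Q U=\sum_P c_{PQ}(U)\,P$ with $c_{PQ}=2^{-n}\tr(U^\dagger QUP)$, we need the moments $\EE[c_{PQ}\bar c_{P'Q'}]$. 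Invariance of $\mu$ under random Paulis at the input and output multiplies $c_{PQ}\bar c_{P'Q'}$ by commutation signs $\pm1$ that average to zero unless $P=P'$ and $Q=Q'$. Hence the only surviving terms are $\EE|c_{PQ}|^2=\mathcal{E}_\mu(P,Q)$, and the second moment becomes a Pauli-diagonal object. The identity is fixed: $c_{\one\one}=1$. For $Q\neq\one$, $\mathcal{E}_\mu(\cdot,Q)$ is a probability distribution on $\PP^*_n$ by unitarity, and under the Haar measure it equals $\mu_{\PP^*_n}$.

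\textbf{Step 3 (bounding the bias).} Substituting back, $\EE\Tr[O\rho]$ takes the form
\begin{equation*}
\sum_{Q}\sum_P \mathcal{E}_\mu(P,Q)\, f(P,Q),
\end{equation*}
where each $f(P,Q)$ is built from $O$, $W_2$, $B_Q$ and the input state. The Haar value is the same expression with $\mu_{\PP^*_n}$ in place of $\mathcal{E}_\mu(\cdot,Q)$.

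The main obstacle is to show that this can be written as $\sum_Q w_Q \sum_P \mathcal{E}_\mu(P,Q)\, g_Q(P)$, with nonnegative weights $\sum_Q w_Q\le 1$ and $|g_Q(P)|\le\|O\|$ (or $\le 1$ after shifting $O$ so that its eigenvalues lie in $[0,1]$). Given this, the difference of the two expectations is at most $\sum_Q w_Q\, \TV(\mathcal{E}_\mu(\cdot,Q),\mu_{\PP^*_n})\le \max_Q \TV(\mathcal{E}_\mu(\cdot,Q),\mu_{\PP^*_n})$, which is the claimed bound.

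My first attempt at this obstacle is to take $w_Q$ proportional to $\|B_Q\ket{\psi}\|^2$. For each fixed pair $(P,Q)$, the term in which $Q$ is replaced by $P$ is the expectation of $O$ on a normalized pure state, namely $W_2 (P\otimes B_Q)$ applied to the state before $W_1$. So $g_Q(P)$ is bounded by $\|O\|$. The $Q=\one$ term contributes identically under both ensembles and cancels.
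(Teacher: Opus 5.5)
The paper does not prove this proposition; it imports it from Schuster et al., so there is no in-paper argument to compare against. Your treatment of experiments that query $U$ and $U^\dagger$ (in either order) is sound. The output-Pauli twirl sandwiches $W_1$ and makes its system part Pauli-diagonal. The weights $w_Q=\|(\one\otimes B_Q)\ket{\psi}\|^2$, with $\ket{\psi}=W_0\ket{0}$, sum to one because $\sum_Q B_Q^\dagger B_Q=2^{-n}\operatorname{tr}_S(W_1^\dagger W_1)=\one$. One small correction: strictly $\EE|c_{PQ}|^2=\mathcal{E}_\mu(Q,P)$, and this equals $\mathcal{E}_\mu(P,Q)$ only because $\mu$ is invariant under $U\mapsto U^\dagger$; the hypotheses do give you that.

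The genuine gap is Step~1. Conjugation and transposition relabel the single random $U$, so they act on both queries at once. With $V=U^T$, the pair $(U^T,U^*)$ becomes $(V,V^\dagger)$. But $(U,U^*)$ and $(U^T,U^\dagger)$ always have the form $(V,\bar V)$, and no combination of $U\mapsto\bar U,U^T,U^\dagger$ turns a conjugate pair into an inverse pair. These really are different experiments. Applying $U$ and then $U^\dagger$ to $\ket{0}$ always returns $\ket{0}$, whereas $U$ followed by $U^*$ does not: on one qubit, random Paulis return with probability $1$ and Haar unitaries with probability $2/3$. The proposition explicitly covers this case, and the paper needs it in Lemma~\ref{lem:convolution_weakandstrong}, since the weak design does not protect against it.

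Your Step~2 does not extend to this case. Under $U\mapsto R_oUR_i$ the system sees, in time order, $R_i,U,R_o,W_1,R_i,\bar U,R_o$ up to phases. The twirls are interleaved rather than nested, so $W_1$ is never Pauli-twirled. Moreover $\bar UQU$ does not merely pick up a sign, and its Pauli coefficients are not governed by $\mathcal{E}_\mu$.

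What works is to expand the two queries jointly:
\begin{itemize}
\item Write $U\otimes\bar U=\sum_{P,Q}c'_{PQ}\,|P\rangle\!\rangle\langle\!\langle Q|$, with $|P\rangle\!\rangle=(P\otimes\one)\ket{\Phi}$ and $c'_{PQ}=2^{-n}\operatorname{tr}(PUQU^\dagger)$ real.
\item The twirl only multiplies $c'_{PQ}$ by $\pm1$, so $\EE[c'_{PQ}c'_{P'Q'}]=\delta_{PP'}\delta_{QQ'}\mathcal{E}_\mu(P,Q)$.
\item Each term is a causal process with postselection. Prepare $\ket{\Phi}_{SM}$ and feed $S$ into $W_1$ in place of the first query's output. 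Then project the stored original system register (carrying $Q$) together with $W_1$'s output system onto $|Q\rangle\!\rangle$. Finally release $M$ as the second query's output after applying $P^T$, using $(P\otimes\one)\ket{\Phi}=(\one\otimes P^T)\ket{\Phi}$.
\item The probability $w_Q$ of Bell outcome $Q$ is independent of $P$, and $\sum_Q w_Q=1$.
\end{itemize}
This gives exactly your Step~3 structure $\sum_Q w_Q\sum_P\mathcal{E}_\mu(P,Q)g_Q(P)$, and hence the same TV bound. Conjugation and transposition invariance then reduce $(U^T,U^\dagger)$ and the reversed orders to this case.
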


We first show how to construct an ensemble that upper bounds Equation~\eqref{eqn:TV-small-2-design} on the Cartesian product of two graphs, provided that we already have ensembles on the respective graphs that are close to the uniform distribution.
\begin{lemma}[Pauli mixing under Cartesian product]
\label{lem:Cartasian_mixing}
     Let $\mu_R$, $\mu_C$ be two ensembles acting on $n_R$, $n_C$ qubits respectively, with circuit depths $d_R$ and $d_C$ over interaction graphs $G_R$ and $G_C$ respectively.
     Assume both are close to Pauli mixing:
    \begin{equation}
        \max_{Q\in \PP^*} \TV(\mathcal{E}_{\mu_b}(\cdot, Q), \mu_{\PP^*_{n_b}})\leq \varepsilon_b,
    \end{equation}
    for $b \in \{R, C\}$. Then there exists a circuit ensemble $\mu_g$ on $n = n_R \cdot n_C$ qubits, of depth $2d_R + d_C$ with respect to the product graph $G_g = G_R \cross G_C$ that satisfies
    \begin{equation}
    \begin{split}
        \max_{P\in \PP^*_{n}} \TV(\mathcal{E}_{\mu_g}(\cdot, P), \mu_{\PP^*_{n_R \cdot n_C}})
        &\leq (n_C+1) \varepsilon_R + n_C\left(\frac{1}{4} + \varepsilon_C\right)^{n_R/a} + \frac{4}{3}e^{-\frac{9n_R}{8}\left(1 - \frac{4}{3a}\right)^2} + \frac{n_C}{4^{n_R}}\\
        &
        \leq (n_C + 1)(\varepsilon_R + 2e^{-n_R \log(2)/3}) ,
    \end{split}
    \end{equation}
    for any $a > 4/3$, where the latter inequality holds when $\varepsilon_C \leq 1/4$ and $a = 3$. 
\end{lemma}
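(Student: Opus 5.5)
The plan is to take $\mu_g$ to be a ``rows--columns--rows'' sandwich. Identify the $n=n_Rn_C$ qubits with $V_R\times V_C$. A \emph{row} is a copy of $G_R$, meaning the set $V_R\times\{c\}$, and there are $n_C$ of them. A \emph{column} is a copy of $G_C$, meaning $\{r\}\times V_C$, and there are $n_R$ of them. By Definition~\ref{def:cartasian_product} each row and each column is a subgraph of $G_R\times G_C$. So the following circuit respects the product graph and has depth $2d_R+d_C$:
\begin{enumerate}
\item apply independent copies of $\mu_R$ on all rows in parallel;
\item apply independent copies of $\mu_C$ on all columns;
\item apply independent copies of $\mu_R$ on all rows again.
\end{enumerate}
Conjugating a Pauli by a product of independent blocks acts independently on each block's Pauli factor. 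Hence the distribution $\mathcal{E}_{\mu_g}(\cdot,P)$ is the composition of three Markov kernels on Pauli strings, and each kernel factorises over rows or over columns. I will repeatedly use the data-processing inequality for TV under kernels, together with Lemma~\ref{lem:subadditivity TV distance}, to replace each block kernel by the ideal kernel. The ideal kernel sends a nonidentity block Pauli to a uniform nonidentity Pauli on that block and fixes the identity.

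\textbf{Step 1 (first row layer).} Start from $P\in\PP^*_n$. Only one fact is needed downstream: at least one row carries a nonidentity factor, and after step 1 that row's factor is $\varepsilon_R$-close to uniform on $\PP^*_{n_R}$. Fix one such row $c_0$. Replacing its kernel by the ideal one costs at most $\varepsilon_R$. Call the resulting row Pauli $Q$. Under the ideal kernel each site of $Q$ is nonidentity with probability close to $3/4$. A Chernoff bound, applied to the uniform distribution on all of $\PP_{n_R}$ and then corrected for excluding the identity, shows that $Q$ has fewer than $n_R/a$ nonidentity sites with probability at most $\tfrac43 e^{-\frac{9n_R}{8}(1-\frac{4}{3a})^2}$, where $a>4/3$ is needed so that $n_R/a<3n_R/4$. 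This explains the third term.

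\textbf{Step 2 (column layer).} Condition on $Q$ having at least $n_R/a$ nonidentity sites. Then at least $n_R/a$ columns enter step 2 with a nonidentity Pauli. Each such column is mapped to a Pauli that is $\varepsilon_C$-close to uniform on $\PP^*_{n_C}$. The marginal of that uniform distribution at any fixed row $c$ is the identity with probability $\le 1/4$. The $\varepsilon_C$ deviation inflates this to at most $1/4+\varepsilon_C$, and I will bound this event directly rather than paying $\varepsilon_C$ in TV. The columns are independent, so row $c$ ends step 2 with the identity factor with probability at most $(1/4+\varepsilon_C)^{n_R/a}$. A union bound over the $n_C$ rows gives the second term. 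Outside this bad event, every row is nonidentity entering step 3.

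\textbf{Step 3 (second row layer) and comparison.} Replace all $n_C$ row kernels by their ideal versions, which costs $n_C\varepsilon_R$ by subadditivity. Together with Step 1 this gives the total $(n_C+1)\varepsilon_R$. The output is now exactly the product of independent uniform nonidentity Paulis on each row. It remains to compare this with $\mu_{\PP^*_n}$. The product distribution is $\mu_{\PP^*_n}$ conditioned on the event ``no row is identity''. The complement of that event has probability at most $n_C 4^{-n_R}$ under $\mu_{\PP^*_n}$, so the TV distance is at most $n_C/4^{n_R}$. This is the last term.

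The final simplification comes from setting $a=3$ and using $\varepsilon_C\le 1/4$. This gives $(1/2)^{n_R/3}=e^{-n_R\log 2/3}$, and the Gaussian term $\tfrac43 e^{-\frac{9n_R}{8}\cdot\frac{25}{81}}$ is dominated by $2e^{-n_R\log2/3}$ up to the stated constants, so everything folds into $(n_C+1)(\varepsilon_R+2e^{-n_R\log(2)/3})$. I expect the main obstacle to be Step 2. The conditioning on $Q$ has to be interleaved correctly with the TV replacements. I also need to justify bounding the identity-marginal by $1/4+\varepsilon_C$ when the column kernel is only approximately ideal. The Chernoff constant for the nonidentity-conditioned distribution has to be checked as well. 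I would handle all three by a coupling argument: couple each real kernel with its ideal counterpart, so that failures are charged additively and the remaining analysis is carried out under the ideal kernels.
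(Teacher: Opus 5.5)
Your proposal is correct and is essentially the paper's proof. It uses the same rows--columns--rows sandwich $\mu_R^{\otimes n_C} * \mu_C^{\otimes n_R} * \mu_R^{\otimes n_C}$ and the same bad event: either the first layer fails to produce a row of weight $\ge n_R/a$, which costs $\varepsilon_R + \tfrac43 e^{-\frac{9n_R}{8}(1-\frac{4}{3a})^2}$, or some row is all-identity after the column layer, which costs $n_C(\tfrac14+\varepsilon_C)^{n_R/a}$ by a union bound. The final step is also the same: subadditivity gives $n_C\varepsilon_R$, and comparing the product of uniform row distributions with $\mu_{\PP^*_n}$ gives $n_C/4^{n_R}$.

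Your kernel/coupling language is a repackaging of the paper's convexity decomposition $\delta + (1-\delta)\max_{Q\in\PP^*_{\mathrm{good}}}(\cdots)$. Note that, like the paper, you tacitly compose the three layers as Markov kernels on Pauli strings. That step needs something like Pauli-twirl invariance between layers, which the Clifford blocks used in Lemma~\ref{lem:UUbar_design} provide.
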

\begin{proof}
Throughout the proof, we view $G_g$ as a grid of $n_R \times n_C$ qubits. We refer to the copies of $G_R$ (the vertices $\{(r, c) : r \in V(G_R)\}$ for a fixed column index $c \in V(G_C)$) as \emph{rows}, and to copies of $G_C$ as \emph{columns}. There are $n_C$ rows and $n_R$ columns.
Any Pauli $P \in \PP_n$ can be decomposed either into rows or columns:
\begin{equation}
    P = \bigotimes_{c = 1}^{n_C}R_c = \bigotimes_{r = 1}^{n_R}Q_r
\end{equation}
with $R_c \in \PP_{n_R}$ acting on individual rows, and $Q_r \in \PP_{n_C}$ acting on individual columns.
We construct the ensemble $\mu_g = \mu_R^{\otimes n_C} * \mu_C^{\otimes n_R} * \mu_R^{\otimes n_C}$ in three layers: 
\begin{enumerate}
    \item Sample $n_C$ unitaries from $\mu_R$ and apply them in parallel to each row.
    \item Sample $n_R$ unitaries from $\mu_C$ and apply them in parallel to each column.
    \item Sample $n_C$ unitaries from $\mu_R$ and apply them in parallel to each row.
\end{enumerate}
By construction, the circuit depth with respect to $G_g$ is $2d_R + d_C$.

To bound the distance of $\mathcal{E}_{\mu_g}(\cdot,P)$ from the Pauli mixing distribution $\mu_{\PP^*_n}$, we consider two cases. 
We introduce a \emph{bad event}, where the random Pauli $\tilde P$ produced by the first two layers contains a row that consists entirely of identity matrices (with a corresponding good event where this does not happen).
Using the triangle inequality we decompose
\begin{equation}
\label{eq:tv_decomposition}
\TV\!\big(\mathcal{E}_{\mu_g}(\cdot,P), \mu_{\PP^*_n}\big)
\le \delta 
+ (1-\delta)\,\TV\!\big(\mathcal{E}_{\mu_R^{\otimes n_C}}(\cdot, \tilde P), \mu_{\PP^*_n}\big),
\end{equation}
where $\tilde P$ denotes the random Pauli obtained after the first two layers, conditioned on the good event. Steps 1 and 2 upper bound $\delta$ (bad event occurs), and Step~3 upper bounds the second term.

\medskip
\noindent \textbf{Step 1: First application of $\mu_R$ --- spreading across rows:} We show that after Layer~1, with high probability, at least one row has Pauli weight $\geq n_R/a$, where $a > 4/3$ is arbitrary but fixed beforehand. We define $\PP^*_{\mathrm{reg}, n_R}$ as the set of all Pauli strings with this property, and call them \emph{regular}.
It suffices to consider the worst-case input where $P$ has weight $1$.
Indeed, if $P$ has non-identity support on multiple rows, each such row spreads its weight independently under $\mu_R^{\otimes n_C}$, only increasing the chance that some row ends up with weight $\geq n_R/a$. 
Without loss of generality, we assume that $R_1 \neq \one$ and $R_c = \one$ for $c > 1$. Since $\mu_R^{\otimes n_C}$ acts independently on each row and $UR_{c}U^\dagger = \one$ whenever $R_{c} = \one$, only the first row of $P'$ after the first layer is random.
By assumption, $\mu_R$ is $\varepsilon_R$-close to Pauli mixing, which means that it approximately maps $R_1$ to a random Pauli string. We lower bound the probability that $R_1$ is mapped to a regular Pauli string.

Under the Pauli-mixing distribution $\mu_{\PP^*_{n_R}}$, the probability that an arbitrary non-identity Pauli $R \in \PP^*_{n_R}$ is mapped to an irregular Pauli is
\begin{equation}
    \sum_{R' \notin \PP^*_{\mathrm{reg}, n_R}} \mathcal{E}_{\mu_{\PP^*_{n_R}}}(R', R) = \frac{1}{4^{n_R}-1} |\{R'\in \PP^*_{n_R}: R' \notin \PP^*_{\mathrm{reg}, n_R}\}|.
\end{equation}
The number of irregular Paulis is $\sum_{k=1}^{\lceil n_R/a \rceil - 1}\binom{n_R}{k} 3^{k}$. The probability can be bounded by a tail of a binomial distribution $B(n_R, 3/4)$, which can then be bounded using Hoeffding's inequality as
\begin{equation}
     \frac{4^{n_R}}{4^{n_R}-1}\sum_{k=1}^{\lceil n_R/a \rceil - 1}\binom{n_R}{k} 3^{k}4^{-n_R} \leq \frac{4}{3}\Pr[W \leq \frac{n_R}{a}] \leq \frac{4}{3}e^{-\frac{\left(\frac{n_R}{a}-\frac{3n_R}{4}\right)^2}{n_R}}.
\end{equation}
Therefore, under the exact Pauli-mixing distribution
\begin{equation}
     \sum_{R' \in \PP^*_{\mathrm{reg}, n_R}} \mathcal{E}_{\mu_{\PP^*_{n_R}}}(R', R) \geq 1 - \frac{4}{3}e^{-\frac{9n_R}{8}\left(1 - \frac{4}{3a}\right)^2}.
\end{equation}
Since $\TV(\mathcal{E}_{\mu_R}(\cdot, Q_0), \mu_{\PP^*_{n_R}}) \leq \varepsilon_R$,
the probability of any event sampled from $\mu_R$ shifts by at most $\varepsilon_R$ with respect to the perfect Pauli-mixing\footnote{This holds for the total probability on any subset of events.}

\begin{equation}\label{eq:step1}
      \sum_{R' \in \PP^*_{\mathrm{reg}, n_R}} \mathcal{E}_{\mu_R}(R', R) \geq 1 - \left(\frac{4}{3}e^{-\frac{9n_R}{8}\left(1 - \frac{4}{3a}\right)^2} + \varepsilon_R\right).
\end{equation}

\medskip
\noindent \textbf{Step 2: Application of $\mu_C$ --- spreading across columns:} 
Define the following events for the output of the first two layers ($\mu_C^{\otimes n_R} * \mu_R^{\otimes n_C}$):
\begin{itemize}
    \item Good event: After the first two layers, every row has non-identity support. We define $\PP^*_{\mathrm{good}, n_R \times n_C}$ as all Paulis satisfying this constraint.
    \item Bad event: The complement --- at least one row consists entirely of identities.
\end{itemize}

Our goal is to lower bound the probability of the good event. To achieve this, we assume that after the initial layer, the first row is regular.
By the same reasoning as the reduction to weight-$1$ Paulis, it suffices to consider the worst case of minimal weight $\lceil n_R/a \rceil$: more non-identity entries only increase the number of columns randomized by $\mu_C$, making it easier to avoid all-identity rows. If $c$ is a column (copy of $G_C$) with a non-identity input, then $\mu_C$ is $\varepsilon_C$-close to Pauli mixing on $c$. Under the exact Pauli-mixing distribution, the probability that a given vertex $(r,c)$ receives an identity is $\frac{4^{n_C-1}-1}{4^{n_C}-1} \leq \frac{1}{4}$. Under $\mu_C$, this probability becomes at most $\frac{1}{4} + \varepsilon_C$. For a row $r$ to have the identity on every qubit after $\mu_C^{\otimes n_R}$, all $\lceil n_R/a \rceil$ affected copies of $G_C$ must independently produce identity at row $r$, giving probability at most $\left(\frac{1}{4} + \varepsilon_C\right)^{n_R/a}$. By a union bound over the $n_C$ rows, for every $P$ that has a regular row we get that
\begin{equation}\label{eq:step2}
    \sum_{P' \in \PP^*_{\mathrm{good}, n_R \times n_C}} \mathcal{E}_{\mu_C^{\otimes n_R}}(P', P) \geq 1 - n_C\left(\frac{1}{4} + \varepsilon_C\right)^{n_R/a}.
\end{equation}

\medskip
\noindent \textbf{Step 3: Second application of $\mu_R$ --- final mixing:} We now combine the previous steps to bound the full TV distance. 
From Equations~\eqref{eq:step1} and~\eqref{eq:step2}, the probability of the bad event is at most
\begin{equation}\label{eq:bad}
    \delta \coloneqq \varepsilon_R + \frac{4}{3}e^{-\frac{9n_R}{8}\left(1 - \frac{4}{3a}\right)^2} + n_C\left(\frac{1}{4} + \varepsilon_C\right)^{n_R/a}.
\end{equation}
Conditioned on the good event, every row contains at least one non-identity Pauli. The second application of $\mu_R^{\otimes n_C}$ acts independently on each of the $n_C$ rows. $\mu_R$ produces a distribution that is $\varepsilon_R$-close to uniform on $\PP^*_{n_R}$. Crucially, this $\varepsilon_R$ bound holds for \emph{any} non-identity input (it is a worst-case bound over $Q$), so the specific distribution over inputs produced by the imperfect first two layers is irrelevant. This insight allows us to decompose $\mathcal{E}_{\mu_R^{\otimes n_C} * \mu_C^{\otimes n_R} *\mu_R^{\otimes n_C}}(\cdot, Q)$ according to this event and apply the triangle inequality
\begin{equation}\label{eqn:delta_firsttwosteps}
    \max_{Q \in \PP^*}\TV(\mathcal{E}_{\mu_R^{\otimes n_C} * \mu_C^{\otimes n_R} *\mu_R^{\otimes n_C}}(\cdot, Q), \mu_{\PP^*_{n_R \cdot n_C}}) \leq \delta + (1 - \delta) \cdot \max_{Q \in \PP^*_{\mathrm{good}, n_R \times n_C}}\TV(\mathcal{E}_{\mu_R^{\otimes n_C}}(\cdot, Q), \mu_{\PP^*_{n_R \cdot n_C}}).
\end{equation}
In the bad event, the contribution to the TV distance is at most $\delta$ since this is the total probability mass of the bad branch. In the good event we upper bound the TV distance by the worst case input that could possibly occur.

To upper bound this final term we can make use of the fact that $\mu_R^{\otimes n_C}$ is acting separately on every row. Therefore, the output distribution of this layer is a product distribution over the different copies of $G_R$, and for every separate copy we know that it is $\varepsilon_R$-close to the uniform distribution. Note that the uniform distribution on all qubits $\PP^*_{n_R\cdot n_C}$ is not a product distribution, as we are omitting the all-identity Pauli. On the other hand, we have that the uniform distribution over $\PP^*_{\mathrm{good}, n_R \times n_C} = \bigtimes_{i \in [n_C]} \PP^*_{n_R}$ is a product distribution. We can easily compute that these two distributions are close to each other in TV distance
\begin{equation}\label{eqn:uniform_close_to_product}
    1 - \TV(\mu_{\PP^*_{\mathrm{good}, n_R \times n_C}}, \mu_{\PP^*_{n_R\cdot n_C})} = \frac{(4^{n_R} - 1)^{n_C}}{4^{n_R\cdot n_C} - 1} \geq 1 - \frac{n_C}{4^{n_R}}.
\end{equation}
We can use this to upper bound
\begin{equation}\label{eqn:final_TVdistance}
\begin{split}
    \max_{Q \in \PP^*_{\mathrm{good}, n_R \times n_C}}\TV(\mathcal{E}_{\mu_R^{\otimes n_C}}(\cdot, Q), \mu_{\PP^*_{n_R \cdot n_C}})
    &\leq
    \max_{Q \in \PP^*_{\mathrm{good}, n_R \times n_C}}\TV(\mathcal{E}_{\mu_R^{\otimes n_C}}(\cdot, Q), \mu_{\PP^*_{\mathrm{good}, n_R \times n_C}}) + \TV(\mu_{\PP^*_{\mathrm{good}, n_R \times n_C}}, \mu_{\PP^*_{n_R\cdot n_C})}\\
    &\leq
    \sum_{i = 0}^{n_C - 1} \max_{Q_i\in \PP^*_{n_R}}  \TV(\mathcal{E}_{\mu_R}(\cdot, Q_i), \mu_{\PP^*_{n_R}}) + \frac{n_C}{4^{n_R}}\\
    &\leq n_C \varepsilon_R + \frac{n_C}{4^{n_R}}.  
\end{split}
\end{equation}
The first step follows from the triangle inequality.
The second step follows from subadditivity, using Lemma~\ref{lem:subadditivity TV distance}, while the second term follows from Equation~\eqref{eqn:uniform_close_to_product}.
The third step follows from the fact that $\mu_R$ is $\varepsilon_R$-close in TV distance to the uniform distribution.\\

\noindent Combining Equations~\eqref{eq:bad},~\eqref{eqn:delta_firsttwosteps} and~\eqref{eqn:final_TVdistance} we can upper bound

\begin{equation}
    \max_{Q \in \PP^*_{n_R \cdot n_C}} \TV(\mathcal{E}_{\mu_g}(\cdot, Q), \mu_{\PP^*_{n_R \cdot n_C}}) \leq (n_C + 1)\,\varepsilon_R + n_C\!\left(\frac{1}{4} + \varepsilon_C\right)^{n_R/a} + \frac{4}{3}e^{-\frac{9n_R}{8}\left(1 - \frac{4}{3a}\right)^2} + \frac{n_C}{4^{n_R}},
\end{equation}
for any $a > 4/3$. For $\varepsilon_C \leq 1/4$ and $a = 3$, we can simplify to
\begin{equation}
\begin{split}
    \max_{Q \in \PP^*} \TV(\mathcal{E}_{\mu_g}(\cdot, Q), \mu_{\PP^*_{n_R \cdot n_C}})
    &\leq (n_C + 1)\,\varepsilon_R + n_C\!\left(\frac{1}{2}\right)^{n_R/3} + \frac{4}{3}e^{-\frac{9n_R}{8}\left(1 - \frac{4}{3\cdot 3}\right)^2} + \frac{n_C}{4^{n_R}}\\
    &\leq (n_C + 1)\left(\varepsilon_R + 2 e^{-n_R \log(2)/3}\right).
\end{split}
\end{equation}
\end{proof}

\noindent We now iteratively apply Lemma~\ref{lem:Cartasian_mixing} to construct 2-designs on a $D$-dimensional grid.

\begin{lemma}\label{lem:UUbar_design}
For any $D$, there exists an ensemble $\mu$ of unitaries with circuit depth $d = \order{Dn^{1/D}}$ using circuits on a $D$-dimensional grid, that forms a design with respect to experiments sampling one copy of $U$ or $U^T$ and one copy of $\bar{U}$ or $U^\dagger$, with measurable error $\varepsilon \leq 2(n + 1)e^{-n^{1/D}\log(2)/3}$.
\end{lemma}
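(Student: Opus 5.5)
We will build the ensemble by induction on the number of factors in the grid decomposition~\eqref{eqn:grid_prod_linegraphs}, applying Lemma~\ref{lem:Cartasian_mixing} once per extra dimension. At the end we invoke Proposition~\ref{prop:U_barU}, which turns a bound on Pauli mixing into the claimed design property.

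\textbf{Base case.} Take the line graph $L_{n^{1/D}}$ with $m = n^{1/D}$ qubits. As base ensemble $\nu_1$ we use a brickwork circuit of Haar-random two-qubit gates of depth $\order{m}$, followed by uniformly random single-qubit Cliffords or Paulis. This makes the ensemble invariant under conjugation, transposition and input/output Pauli rotations, as Proposition~\ref{prop:U_barU} requires.

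We need $\max_Q \TV(\mathcal{E}_{\nu_1}(\cdot,Q),\mu_{\PP^*_m}) \le \varepsilon_1$, with $\varepsilon_1$ exponentially small in $m$ and in particular $\varepsilon_1 \le 1/4$. For this we use the standard fact that 1D random circuits are approximate 2-designs in linear depth (Lemma~\ref{lem:1d-strong-kdesigns} with $k=2$). An exact 2-design is exactly Pauli mixing. Choosing depth $c\,m$ with $c$ a large enough constant gives $\varepsilon_1 \le e^{-m\log 2/3}$.

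\textbf{Inductive step.} Suppose $\nu_j$ is defined on the $j$-dimensional grid $L^{0}\times\cdots\times L^{j-1}$ with $m^j$ qubits, has depth $d_j$, and has error $\varepsilon_j$. Apply Lemma~\ref{lem:Cartasian_mixing} with $G_R = L^{j}$, $\mu_R=\nu_1$, $n_R=m$, and $G_C$ the $j$-dimensional grid, $\mu_C=\nu_j$, $n_C = m^j$. Placing the \emph{line} as the row graph $G_R$ is the key choice. The lemma charges $2d_R$, which is then only $\order{m}$ per step. The column ensemble $d_C=d_j$ enters only once, so
\[
d_{j+1} = d_j + 2d_1, \qquad d_D = (2D-1)d_1 = \order{D n^{1/D}}.
\]
Provided $\varepsilon_j \le 1/4$, the simplified bound of the lemma with $a=3$ gives
\[
\varepsilon_{j+1} \le (m^j+1)\bigl(\varepsilon_1 + 2e^{-m\log 2/3}\bigr).
\]
The row error $\varepsilon_R=\varepsilon_1$ is the same fresh base quantity at every step. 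So $\varepsilon_j$ never appears on the right-hand side beyond the check $\varepsilon_j\le 1/4$, and the errors do not compound. The final step, $j+1=D$, gives $\varepsilon_D \le (n+1)(\varepsilon_1 + 2e^{-m\log2/3})$.

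\textbf{Main obstacle: the constants.} To reach exactly $2(n+1)e^{-n^{1/D}\log 2/3}$ we need the base error $\varepsilon_1$ absorbed into the factor $2$. There are two ways to do this.
\begin{itemize}
\item Use the unsimplified bound of Lemma~\ref{lem:Cartasian_mixing} and note that the slack in $(1/2)^{m/3}$, $\tfrac43 e^{-(25/72)m}$ and $m^j/4^m$ leaves room for an extra $e^{-m\log2/3}$-sized term.
\item Take the base ensemble to be exactly Pauli mixing ($\varepsilon_1=0$), e.g.\ a random Clifford on the line, implementable in depth $\order{m}$ with nearest-neighbour gates.
\end{itemize}
I would try the second option first, since it removes the issue entirely.

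\textbf{Remaining checks.} The hypothesis $\varepsilon_j\le 1/4$ at intermediate steps must hold; this is immediate for $m$ above a constant, and small $m$ is handled trivially. Finally, $\mu = \nu_D$ inherits conjugation, transposition and Pauli invariance from its layers. Proposition~\ref{prop:U_barU} then converts the Pauli-mixing bound into the stated measurable error.
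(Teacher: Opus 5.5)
Your proposal is correct and follows the paper's proof. The paper also iterates Lemma~\ref{lem:Cartasian_mixing} with the newly added line as $G_R$, so the factor $2$ falls on the $\order{n^{1/D}}$-depth line ensemble, and the previously built grid as $G_C$; it uses exact random Cliffords on each line (your second option, giving $\varepsilon_R=0$ and hence the stated bound) and concludes with Proposition~\ref{prop:U_barU}. Your explicit checks are details the paper glosses over: the invariance hypotheses of Proposition~\ref{prop:U_barU} (for transposition this relies on the row--column--row layering being palindromic), and the condition $\varepsilon_C\le 1/4$ at intermediate steps, which indeed holds whenever the final bound is below $1$.
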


\begin{proof}

Due to Proposition~\ref{prop:U_barU}, proving the lemma boils down to upper bounding a specific instance of the TV distance $\max_{Q\in \PP^*_n} \TV(\mathcal{E}_{\mu}(\cdot, Q), \mu_{\PP^*_n})$.
We will make use of the Cartesian-product structure of the connectivity graph present in a $D$-dimensional grid from Equation~\eqref{eqn:grid_prod_linegraphs}.
We iteratively apply Lemma~\ref{lem:Cartasian_mixing} to construct the required circuits.
Let $G_{i, \mathrm{grid}} = L_{n^{1/D}}^0 \cross \dots \cross L_{n^{1/D}}^{i}$ denote the graph after $i$ iterations of Lemma~\ref{lem:Cartasian_mixing}. 
On the 1D line graphs we sample from the Clifford group, which we know to be an exact 2-design and can be constructed in linear depth $\order{n^{1/D}}$~\cite{bravyi2021hadamard}.
We start with an exact 2-design $\mu_0$ on $G_{0, \mathrm{grid}}=L_{n^{1/D}}^0$ with depth $d_0 = \order{n^{1/D}}$.
Then for $i=1,2, \dots, D-1$, we apply Lemma~\ref{lem:Cartasian_mixing} on $G_C = G_{i, \mathrm{grid}}$ and $G_R = L_{n^{1/D}}^{i}$ to construct a 2-design $\mu_{i+1}$ on $G_{i+1, \mathrm{grid}}$ with depth $d_{i+1} = 2 \cdot \order{n^{1/D}} + d_i$ and error $2(n^{i/D} + 1)e^{-n^{1/D}\log(2)/3}$.
As long as $n^{1/D} \geq 4$, the error at each step satisfies $\varepsilon_R = 2(n^{i/D} + 1)e^{-n^{1/D}\log(2)/3} \leq \frac{1}{4}$ and we can apply the lemma.
After $D-1$ iterations, we obtain a 2-design $\mu_{D-1}$ on $n$ qubits with depth $d_{D-1} = \order{D n^{1/D}}$ and the final bound on the TV distance is
\begin{equation}
    \varepsilon \leq \max_{Q\in \PP^*_n} \TV(\mathcal{E}_{\mu}(\cdot, Q), \mu_{\PP^*_n}) \leq 2(n + 1)e^{-n^{1/D}\log(2)/3}.
\end{equation}

\end{proof} 

\begin{remark}
   Instead of taking $D$ a constant, we could choose $D = \Theta(\log(n)/\log\log(n/\varepsilon)), n^{1/D} = \Theta(\log(n/\varepsilon))$, such that we get an $n$-dependence in the depth of $d = \order{\log^2(n)}$. This matches the depth of Lemma~2 in Ref.~\cite{schuster2025strong} of strong 2-designs in the all-to-all connectivity, even though they use a different construction. Curiously, it remains an open question whether the lightcone lower bound of $\Omega(\log(n))$ on the depth of strong 2-designs in the all-to-all connectivity can be reached.
\end{remark}
We have constructed an ensemble that is secure against experiments that sample one copy of $U$ or $U^T$ and one copy of $\bar{U}$ or $U^\dagger$.
A weak relative-error 2-design is a design that is secure against experiments that sample either from $U, U^T$ or from $\bar{U}, U^\dagger$, see Lemma 6 of~\cite{schuster2025random}.
We combine the two to construct a strong design.
\begin{lemma}\label{lem:convolution_weakandstrong}
    Let $\mu_1$ be an ensemble that is secure against adversaries that query either one of $U, U^T$ and one of $\bar{U}, U^\dagger$ with measurable error $\varepsilon_1$, and let $\mu_C$ be a weak $\varepsilon_2$-approximate relative-error 2-design. Now, $\mu_1*\mu_2$ is a strong 2-design with measurable error $\max(\varepsilon_1, 2\varepsilon_2)$.
\end{lemma}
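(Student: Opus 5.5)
We need to show that $\mu_1 * \mu_2$ (with $\mu_2 = \mu_C$ the weak relative-error design) is indistinguishable from Haar in every two-query experiment. The plan is to split by query type.

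A strong two-query experiment queries $V_1, V_2 \in \{U, U^T, \bar U, U^\dagger\}$. Up to the usual equivalences, each query type belongs to one of two classes: the ``forward'' class $\{U, U^T\}$ or the ``backward'' class $\{\bar U, U^\dagger\}$. So every experiment falls into one of three cases:
\begin{enumerate}
\item both queries come from the forward class;
\item both queries come from the backward class;
\item one query comes from each class (the mixed case).
\end{enumerate}

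\textbf{Mixed case.} Write $U = U_2 U_1$ with $U_1 \sim \mu_1$ and $U_2 \sim \mu_2$; the order of convolution is irrelevant by symmetry of the argument. Fix $U_2$ and treat it as part of the adversary's unitaries $W_i$. This is legitimate because a fixed unitary composed with a query of $U_1$ (or its transpose, conjugate, or inverse) is still implementable by the adversary. For example, $(U_2U_1)^T = U_1^T U_2^T$ and $(U_2U_1)^\dagger = U_1^\dagger U_2^\dagger$, so the extra factors $U_2^T$ and $U_2^\dagger$ can be absorbed into the adjacent $W_i$.

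For each fixed $U_2$, the resulting experiment makes one forward and one backward query to $U_1$. The hypothesis on $\mu_1$ therefore gives error at most $\varepsilon_1$ relative to replacing $U_1$ by a Haar-random $U_H$. Averaging over $U_2$ yields error $\varepsilon_1$ relative to $U_2 U_H$, and $U_2 U_H$ is exactly Haar by left-invariance of the Haar measure.

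\textbf{Same-class cases.} Now absorb $U_1$ instead, and rely on the weak design $\mu_2$. When both queries are $U$ or $U^T$, the experiment is a weak two-query experiment on $U_2$; when both are $\bar U$ or $U^\dagger$, it is the conjugate of one. By the cited Lemma~6 of~\cite{schuster2025random}, the relative-error property gives closeness in both cases.

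The main obstacle is the constant: obtaining $2\varepsilon_2$ requires converting relative error to measurable (additive) error. For this step I would use the standard argument. A relative $\varepsilon_2$-approximate design satisfies
\[
(1-\varepsilon_2)\Phi_H \preceq \Phi_{\mu_2} \preceq (1+\varepsilon_2)\Phi_H
\]
in completely positive order. This sandwich persists under adversary processing, including transposes, since these operations are positive maps. Consequently, for any two-outcome measurement the difference in expectation is at most $\varepsilon_2$ per outcome, giving at most $2\varepsilon_2$ in trace distance. As before, averaging over the absorbed $U_1$ and using invariance of the Haar measure ($U_H U_1 \sim$ Haar) finishes the argument.

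\textbf{Conclusion.} Taking the worst of the three cases gives measurable error at most $\max(\varepsilon_1, 2\varepsilon_2)$.
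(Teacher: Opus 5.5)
Your proposal is correct and follows the paper's proof. In both, you split by query type, replace whichever of $\mu_1,\mu_2$ is secure against that experiment by the Haar measure while absorbing the other factor into the adversary's unitaries, and conclude from invariance of the Haar measure under convolution; your absorption and relative-to-additive-error steps spell out what the paper takes from Lemma~6 of~\cite{schuster2025random}.

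One justification needs repair. Queries to $U^T$ are covered not because transposition is a positive map (a partial transpose is not positive), but because in the post-selected-teleportation picture a $U^T$ query uses the Choi state of $U$ with its legs swapped, since $\mathrm{vec}(U^T)=\SWAP\,\mathrm{vec}(U)$. The output is therefore still a completely positive image of the two-copy moment operator, so the relative-error sandwich carries over.
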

\begin{proof}
    Lemma 6 of Ref.~\cite{schuster2025random} states that a weak $\varepsilon_2$-approximate relative-error 2-design is indistinguishable from Haar random by any adversary that samples a unitary $U$ two times, up to some error $2\varepsilon_2$, but from their proof it directly follows that relative-error designs are also secure against adversaries that sample both $U$ and $U^T$. We also note that the Haar-measure is invariant under complex conjugation, so a relative-error design is also secure against adversaries that only query $\bar{U}$ and $U^\dagger$.

    Let $A$ be an experiment that queries any combination of $U, U^T, \bar{U}, U^\dagger$. Either $\mu_1$ or $\mu_2$ is secure against this experiment. If $\mu_1$ is secure and we sample $U\sim \mu_1*\mu_2$, then the outcome will be $\varepsilon_1$-close in trace distance to an experiment that queries $\mu_H * \mu_2 = \mu_H$, where $\mu_H$ is the Haar measure. An equivalent argument holds in the case that $\mu_2$ is secure against $A$.
\end{proof}
Combining our result with a construction of weak relative-error designs from Ref.~\cite{schuster2025random} we get the following theorem.

\begin{theorem}\label{theorem:2-designs_optimal}
    For any $D$ and $\varepsilon \geq 2(n+ 1)e^{-n^{1/D}\log(2)/3}$, there exists a strong $\varepsilon$-approximate unitary $2$-design using circuits on a $D$-dimensional grid on $n$ qubits, with circuit depth $d = \order{D n^{1/D} + \log(1/\varepsilon)}$.
\end{theorem}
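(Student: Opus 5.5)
The plan is to compose the two ingredients already prepared and apply Lemma~\ref{lem:convolution_weakandstrong}. The first ingredient is the ensemble $\mu_1$ from Lemma~\ref{lem:UUbar_design}. It has depth $\order{Dn^{1/D}}$ on the $D$-dimensional grid and measurable error $\varepsilon_1 \leq 2(n+1)e^{-n^{1/D}\log(2)/3}$ against experiments making one query from $\{U,U^T\}$ and one from $\{\bar U, U^\dagger\}$. By the hypothesis $\varepsilon \geq 2(n+1)e^{-n^{1/D}\log(2)/3}$, we get $\varepsilon_1\leq\varepsilon$.

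The second ingredient is a weak relative-error $\varepsilon/2$-approximate $2$-design $\mu_2$ from Ref.~\cite{schuster2025random}. I would take the low-depth 1D construction there. It patches brickwork circuits on blocks of size $\xi = \order{\log(n/\varepsilon)}$ using their (weak) gluing lemma, and reaches depth $\order{\log(n/\varepsilon)}$ for constant $k=2$. I would embed it along a Hamiltonian (snake) path of the $D$-dimensional grid. Such a path exists, since consecutive vertices of a boustrophedon ordering are grid neighbours. A 1D nearest-neighbour circuit therefore runs on the grid with no overhead.

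The result is then $\mu_1 * \mu_2$. By Lemma~\ref{lem:convolution_weakandstrong} it is a strong $2$-design with measurable error $\max(\varepsilon_1, 2\cdot\varepsilon/2)\leq\varepsilon$. Its depth is $\order{Dn^{1/D}} + \order{\log(n/\varepsilon)}$. To match the stated $\order{Dn^{1/D}+\log(1/\varepsilon)}$, note that $\log n = \order{n^{1/D}}$ for constant $D$. More generally, the hypothesis on $\varepsilon$ already forces $\log(1/\varepsilon) \lesssim n^{1/D}$, so the $\log n$ term is absorbed either way. Precisely, $\log n \leq D\, n^{1/D}$, since $x\leq e^{x}$ with $x = \log(n)/D$. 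So $\log(n/\varepsilon) = \order{Dn^{1/D} + \log(1/\varepsilon)}$.

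The main obstacle is the bookkeeping around the weak design. I need to confirm that the cited 1D weak construction gives a \emph{relative-error} $2$-design at the required precision and depth $\order{\log(n/\varepsilon)}$. I also need to confirm that "measurable error" in Lemma~\ref{lem:convolution_weakandstrong} matches the trace-distance notion in the definition of strong designs, and that the factor $2$ is handled by running the weak design at error $\varepsilon/2$. If the 1D weak-design depth carried an extra $\log n$ factor, I would instead run a constant-depth-overhead weak design on each row and glue. I expect this to be unnecessary, since the $\log n$ is absorbed by $Dn^{1/D}$ anyway.
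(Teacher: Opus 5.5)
Your proposal is correct and is essentially the paper's proof: convolve the ensemble of Lemma~\ref{lem:UUbar_design} with the 1D weak relative-error $2$-design of Ref.~\cite{schuster2025random}, then invoke Lemma~\ref{lem:convolution_weakandstrong}. Your extra bookkeeping fills in steps the paper leaves implicit, and it all checks out: embedding the 1D circuit along a snake path of the grid, running the weak design at precision $\varepsilon/2$ to absorb the factor $2$, and using $\log n \leq D n^{1/D}$ to absorb the $\log n$ term.
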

\begin{proof}
    Ref.~\cite{schuster2025random} provides a construction of a weak relative-error unitary 2-design $\nu'$ in depth $\order{\log(n/\varepsilon)}$ in the 1D geometry. Combining this construction with Lemmas~\ref{lem:UUbar_design} and~\ref{lem:convolution_weakandstrong} proves the lemma.
\end{proof}

\begin{remark}
The weak design that is added in this theorem makes the proof easier, but is not technically needed when we apply the gluing lemma to a block of strong $k$-designs. These strong $k$-designs form a weak ``glued'' $k$-design by the gluing Lemma in~\cite{schuster2025random}, which can be used to pull out the required weak $2$-design.
On top of that, we believe that our construction in Lemma~\ref{lem:UUbar_design} is a strong design itself, since it is much deeper than known constructions of weak relative-error designs ($n^{1/D}$ instead of $\log(n)$) and has a similar structure (though the gluing lemma is not directly applicable).
\end{remark}
Now we use our strong 2-design in the strong $k$-design construction of Corollary~\ref{corollary:gluing_lemma} to prove our main result.
\begin{theorem}[Theorem~\ref{theorem:main_better}]\label{theorem:gluing_lattice_result}
For any constant $D$ and $\varepsilon \geq 6(n + 1)e^{-n^{1/D}\log(2)/3}$, there exists a strong $\varepsilon$-approximate unitary $k$-design on a $D$-dimensional grid in circuit depth $d = \order{n^{1/D} + k\log^7(k)\log(nk/\varepsilon)}$ without using auxiliary qubits.
\end{theorem}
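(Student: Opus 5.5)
We instantiate the protocol of Corollary~\ref{corollary:gluing_lemma} on the $D$-dimensional grid. Steps 1 and 4 use the strong $2$-design of Theorem~\ref{theorem:2-designs_optimal}. Steps 2 and 3 use the 1D strong $k$-designs of Lemma~\ref{lem:1d-strong-kdesigns}, placed on patches that are contiguous paths in the grid.

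First, fix a Hamiltonian path through the grid, for instance a boustrophedon (snake) ordering. Consecutive qubits along this path are adjacent in the grid, so any 1D nearest-neighbour circuit on a contiguous segment of the path is a circuit on the grid of the same depth. Partition the path into $m=n/\xi$ patches of $\xi$ consecutive qubits, with
\[
\xi = \tfrac{16}{3}\log_2(nk^2/\varepsilon)+\order{1} = \order{\log(nk/\varepsilon)},
\]
rounding so that $m$ is an integer; for leftover qubits, enlarge the last patch by at most a constant factor. The gluing lemma only uses the abstract tensor-factor structure of the subsystems $a,b,c,d$, not any geometry. Corollary~\ref{corollary:gluing_lemma} therefore applies verbatim to this path ordering.

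Second, account for depth and error term by term.

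\emph{Steps 1 and 4.} Theorem~\ref{theorem:2-designs_optimal} with target error $\varepsilon/3$ requires $\varepsilon/3 \ge 2(n+1)e^{-n^{1/D}\log(2)/3}$. This is exactly the hypothesis $\varepsilon\ge 6(n+1)e^{-n^{1/D}\log(2)/3}$. The resulting depth is $\order{Dn^{1/D}+\log(1/\varepsilon)}=\order{n^{1/D}+\log(1/\varepsilon)}$ for constant $D$.

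\emph{Steps 2 and 3.} Each patch pair has $2\xi$ qubits, and we apply Lemma~\ref{lem:1d-strong-kdesigns} with error $\varepsilon/(3n)$. This needs the applicability condition $\varepsilon/(3n)\ge 2k^2/2^{2\xi}$. It holds because $2^{2\xi}\ge (nk^2/\varepsilon)^{32/3}$ dominates $6nk^2/\varepsilon$ once the $\order{1}$ constant in $\xi$ is chosen large enough. The depth of each step is then
\[
\order{\log^7(k)\,(2\xi k+\log(3n/\varepsilon))}=\order{k\log^7(k)\log(nk/\varepsilon)}.
\]
All patches in a layer are disjoint segments of the path, so they run in parallel. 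Summing the four steps gives total depth $\order{n^{1/D}+k\log^7(k)\log(nk/\varepsilon)}$, where the $\log(1/\varepsilon)$ term is absorbed. No auxiliary qubits are used anywhere, because the Clifford blocks on lines, the weak relative-error design of~\cite{schuster2025random}, and the 1D circuits are all auxiliary-free.

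The main obstacle is checking that Corollary~\ref{corollary:gluing_lemma} really only requires the patches to be consecutive blocks in some linear ordering, with adjacency needed only within a patch, rather than a 1D chain architecture. The points to confirm are these:
\begin{itemize}
\item the strong $2$-designs in Steps 1 and 4 act on all $n$ qubits, as Theorem~\ref{theorem:2-designs_optimal} provides;
\item the extra register $d$ in Lemma~\ref{lemma:strong_gluing} accommodates the qubits outside the two patches being glued at each iteration;
\item the error bookkeeping ($\varepsilon/3$ for each $2$-design and $\varepsilon/(3n)$ for each local design, summed over at most $n$ patches) matches the corollary as stated.
\end{itemize}
If the corollary's hypothesis on $\xi$ needs adjustment for a ragged final patch, I would simply pad that patch to size between $\xi$ and $2\xi$, which changes only constants.
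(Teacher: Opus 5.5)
Your proposal is correct and matches the paper's proof. It uses the strong $2$-designs of Theorem~\ref{theorem:2-designs_optimal} at error $\varepsilon/3$ for the outer layers, which is exactly where the hypothesis $\varepsilon \geq 6(n+1)e^{-n^{1/D}\log(2)/3}$ comes from, together with 1D strong $k$-designs from Lemma~\ref{lem:1d-strong-kdesigns} on patches of size $\xi = \order{\log(nk/\varepsilon)}$, glued via Corollary~\ref{corollary:gluing_lemma}. You also correctly supply details the paper leaves implicit: the snake-path embedding that makes the 1D patch circuits valid on the grid, the applicability check $\varepsilon/(3n) \geq 2k^2/2^{2\xi}$, and the corollary's exact $\varepsilon/(3n)$-on-$2\xi$-qubit bookkeeping, which the paper loosely states as $\varepsilon/n$ on $\xi$ qubits.
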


\begin{proof}
By Theorem~\ref{theorem:2-designs_optimal} we can construct an $\varepsilon/3$-approximate strong unitary $2$-design in depth $\order{Dn^{1/D} + \log(1/\varepsilon)}$. By Lemma~\ref{lem:1d-strong-kdesigns} we can construct an $\varepsilon/n$-approximate strong unitary $k$-design on $\xi$ qubits, in depth $\order{\log^7(k)(\xi k + \log(n/\varepsilon))}$. Choosing $\xi = \frac{16}{3}\log_2(nk^2/\varepsilon) + \order{1}$, we use Corollary~\ref{corollary:gluing_lemma} to construct $\varepsilon$-approximate strong unitary $k$-designs in depth $d = \order{Dn^{1/D} + k\log^7(k)\log(nk/\varepsilon)}$.
\end{proof}

\section{Strong random unitaries in near-optimal depth through routing}\label{sec:routing}

In this section we first present a general routing procedure, and then apply it to the strong designs in all-to-all connectivity from Ref.~\cite{schuster2025strong}.
This leads to the strong $k$-designs in Theorem \ref{theorem:main_designs}, which have suboptimal scaling with the number of qubits $n$, but improved scaling in $k$ and $\varepsilon$.
While similar routing constructions have appeared in the literature~\cite{Childs_2019, Yuan_2025}, we include a self-contained derivation here to highlight a structural parallel with our construction of strong 2-designs in Section~\ref{sec:designs}.

We begin by describing the routing problem as formulated in Ref.~\cite{alon1993routing}.
Let $G(V,E)$ be a graph with vertices $V$ and edges $E$, and assume that all vertices originally contain a unique marker or ``pebble'' $p$. Now let $\pi$ be some permutation of the vertices $V$ and associate to each pebble $p$  a destination vertex $\pi(v) \in V$, such that all pebbles have distinct destination. Pebbles can be moved to different vertices of $G$ according to the following procedure: At each step a disjoint set of edges of $G$ is selected and the pebbles at the two endpoints of these edges are interchanged. Our goal is to move or ``route'' the pebbles to their respective destinations in a minimum number of steps. 

Let $p_v(t)$ denote the location of the pebble $p$ with initial location $v$ at time $t$. Thus for any $t$, the set $\{p_v(t):v\in V\}$ is just a permutation of $V$. Define $\rt(G,\pi)$ to be the minimum possible number of steps to achieve some permutation $\pi$, and define $\rt(G)$, the \emph{routing number} of $G$, as
\begin{equation}
    \rt(G) \coloneqq \max_{\pi} \rt(G,\pi).
\end{equation}

For structured graphs, we can find bounds on the routing number. For Cartesian product graphs  Ben\u{e}s~\cite{benevs1965mathematical} proved that the routing number can be bounded by the routing number of the underlying graphs, the same result was later independently proved by Baumslag and Annexstein~\cite{baumslag1991unified}:
\begin{theorem}[Theorem 1 in \cite{baumslag1991unified}] \label{thm:cartasian_routing}
For any two graphs $G$ and $G'$ we have that\footnote{Note the parallel to the depth $d_R + 2d_C$ in Lemma \ref{lem:Cartasian_mixing}.}
\begin{equation}
    \rt(G\cross G') \leq 2\, \rt(G) + \rt(G').
\end{equation}
\end{theorem}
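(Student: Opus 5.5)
Write $V(G)$ and $V(G')$ for the vertex sets, and view $G\cross G'$ as a grid whose \emph{rows} are the copies $G\times\{v'\}$ and whose \emph{columns} are the copies $\{v\}\times G'$. The plan is the classical three-phase (Beneš / Clos) strategy, mirroring the three layers of Lemma~\ref{lem:Cartasian_mixing}. We route first inside the rows (cost $\rt(G)$, all rows in parallel since they are edge-disjoint), then inside the columns (cost $\rt(G')$, in parallel), then inside the rows again (cost $\rt(G)$). This gives total $2\,\rt(G)+\rt(G')$.

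Fix a permutation $\pi$ of $V(G)\times V(G')$. A pebble starting at $(u,u')$ has destination $\pi(u,u')=(w,w')$. In the middle, column phase, pebbles can only change their $G'$-coordinate, and they do so inside their current column. So the pebble must already sit in some column $\{c\}\times G'$ and be moved there to row $w'$. The final row phase then moves it within row $w'$ to $(w,w')$.

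For this to be a valid routing, two conditions must hold:
\begin{itemize}
\item after phase~1, each row $u'$ must still contain its $|V(G)|$ pebbles, one per column, which is automatic since phase~1 permutes within rows;
\item within each column, the destination rows $w'$ of the $|V(G')|$ pebbles present must be pairwise distinct, so that the column phase is a genuine permutation of that column.
\end{itemize}

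The key step, and the main obstacle, is therefore to choose, for each pebble, an intermediate column $c$ satisfying these constraints. Consider the bipartite multigraph $H$ with left vertices the source rows $u'\in V(G')$ and right vertices the destination rows $w'\in V(G')$. We draw one edge per pebble, from its source row to its destination row. Since $\pi$ is a bijection, every source row contains exactly $|V(G)|$ pebbles and every destination row receives exactly $|V(G)|$ pebbles. Hence $H$ is $|V(G)|$-regular.

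By K\"onig's theorem (equivalently, repeated application of Hall's theorem), a regular bipartite multigraph decomposes into $|V(G)|$ perfect matchings $M_c$, indexed by $c\in V(G)$. We assign every pebble whose edge lies in $M_c$ to intermediate column $c$. Because $M_c$ is a perfect matching:
\begin{itemize}
\item each source row sends exactly one pebble to column $c$, so phase~1 is a well-defined permutation of each row;
\item each destination row receives exactly one pebble from column $c$, so the phase~2 targets in each column are distinct.
\end{itemize}

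After phase~2, row $w'$ contains exactly the pebbles destined for row $w'$, one per column. Phase~3 is then a permutation within each row, which finishes the routing.

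It remains to count steps. Each phase is a family of permutations on vertex-disjoint copies of $G$ (respectively $G'$). Each such permutation can be routed in at most $\rt(G)$ (respectively $\rt(G')$) steps, padding the shorter routings with idle steps. Swaps in different copies use disjoint edges, so they can be performed simultaneously. Summing over the three phases gives $\rt(G\cross G')\le 2\,\rt(G)+\rt(G')$, and taking the maximum over $\pi$ yields the theorem.
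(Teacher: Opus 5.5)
Your proof is correct and follows the same three-phase row--column--row strategy that the paper sketches (citing Baumslag and Annexstein), giving at most $2\,\rt(G)+\rt(G')$ steps. The paper does not explain why the preliminary row permutation exists, i.e.\ one after which no two pebbles in a column share a target row; you supply this correctly by decomposing the $|V(G)|$-regular bipartite multigraph of source and destination rows into perfect matchings via K\"onig's theorem.
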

Note that the theorem is symmetrical under exchange of $G$ and $G'$, so one can always choose the factor of two to apply to the graph with the smaller routing number. The algorithm of~\cite{baumslag1991unified} achieving Theorem~\ref{thm:cartasian_routing} works in three stages on a 2D grid: a preliminary row permutation to ensure no two pebbles in the same column share a target row, a column permutation to move each pebble to its target row, and a final row permutation to place each pebble in its correct column\footnote{Note the structural similarity to the proof of Lemma \ref{lem:Cartasian_mixing}.}. This theorem can be used to find an efficient $\SWAP$ network to permute qubits with interactivity constraints given by a $D$-dimensional grid.

\begin{lemma}[Efficient permutations in $D$ dimensions]
\label{lem:route_grid_unitary}
Given $n$ qubits distributed over a $D$-dimensional grid with nearest-neighbor interactions, then for any permutation $\pi$ on the set of qubits there exists a unitary $U_\pi$ implementing this permutation in depth $d(U_\pi) = \order{D n^{1/D}}$.
\end{lemma}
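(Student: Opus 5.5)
The plan is to bound the routing number of the $D$-dimensional grid by induction on the dimension using Theorem~\ref{thm:cartasian_routing}, and then turn the routing schedule into a $\SWAP$ circuit. Recall from Equation~\eqref{eqn:grid_prod_linegraphs} that $G_{D,\mathrm{grid}} = L^{0}_{n^{1/D}}\cross\dots\cross L^{D-1}_{n^{1/D}}$.

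\textbf{Base case.} I would first bound the routing number of a line graph: $\rt(L_m) \le m$. Odd--even transposition sort on a path of $m$ vertices sorts any permutation in at most $m$ rounds. Each round swaps pebbles along a matching of disjoint edges, either all edges $(2i-1,2i)$ or all edges $(2i,2i+1)$. Sorting the pebbles by their destination labels therefore realizes any $\pi$ in $m$ steps. This is the standard $0$--$1$ principle argument, and I would simply cite it.

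\textbf{Inductive step.} Let $G_i = L^{0}\cross\dots\cross L^{i}$. I apply Theorem~\ref{thm:cartasian_routing} with the factor of two placed on the line graph:
\begin{equation}
\rt(G_{i+1}) \le 2\,\rt(L^{i+1}_{n^{1/D}}) + \rt(G_i) \le 2n^{1/D} + \rt(G_i).
\end{equation}
Unrolling from $\rt(G_0)\le n^{1/D}$ gives
\begin{equation}
\rt(G_{D,\mathrm{grid}}) \le (2D-1)\,n^{1/D} = \order{Dn^{1/D}}.
\end{equation}
The choice of which factor carries the $2$ is the main point to get right. If the factor of two fell on the growing graph $G_i$, the bound would blow up exponentially, like $2^D$. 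Since Theorem~\ref{thm:cartasian_routing} is symmetric, we are free to put it on the line graph, and this is also the step I expect to require the most care in stating.

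\textbf{Circuit.} Finally, each routing step is a set of disjoint grid edges, so it is implemented by one layer of parallel two-qubit $\SWAP$ gates on nearest neighbours. Composing the layers of an optimal schedule for $\pi$ gives a unitary $U_\pi$ with $d(U_\pi) \le \rt(G_{D,\mathrm{grid}},\pi) \le \rt(G_{D,\mathrm{grid}}) = \order{Dn^{1/D}}$.

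Two details need checking. First, one must fix the convention relating pebble moves to the qubit permutation, namely whether the schedule realizes $\pi$ or $\pi^{-1}$. This is harmless, because we may route $\pi^{-1}$ instead, with the same bound. Second, a non-integer $n^{1/D}$ only changes constants.
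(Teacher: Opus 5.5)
Your proposal is correct and follows essentially the same route as the paper. You bound $\rt(G_{D,\mathrm{grid}})$ by iterating Theorem~\ref{thm:cartasian_routing} with the factor of two always placed on the newly added line graph, and realize each routing step as one layer of parallel nearest-neighbour $\SWAP$ gates; the only additions are your odd--even transposition sort argument for $\rt(L_m)\le m$ (which the paper just asserts as $\order{\ell}$) and the slightly sharper constant $2D-1$ instead of $2D$.
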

\begin{proof}
As described above, any routing scheme can be implemented by a $\swap$ circuit with depth equivalent to the number of steps required by the routing protocol. Let $U_\pi$ be this $\swap$ circuit, by definition
\begin{equation}
    d(U_\pi) \leq \rt(G_{D,\mathrm{grid}}).
\end{equation}
Therefore, it is sufficient to upper bound the routing time of $G_{D, \mathrm{grid}}$. The routing number of a line graph of $\ell$ vertices is $\rt(L_\ell) = \order{\ell}$. By definition, a $D$-dimensional grid is the Cartesian product of $D$ line graphs. Iteratively taking the Cartesian product of line graphs, applying Theorem~\ref{thm:cartasian_routing}, and making sure that the factor $2$ is applied to the line graph that is being added, we get
\begin{equation}
    d(U_{\pi}) \leq \rt(G_{D,\mathrm{grid}}) \leq 2 \rt(L_{n^{1/D}}) + \rt(G_{D-1,\mathrm{grid}}) \leq 2D\cdot \rt(L_{n^{1/D}}) \leq \order{Dn^{1/D}}.
\end{equation}
\end{proof}

\noindent This lemma allows us to bound the depth of compiling all-to-all circuits onto a $D$-dimensional grid:
\begin{theorem}
\label{thm:routing}
Let $V$ be a quantum circuit acting on $n$ qubits, consisting of $2$-local gates with all-to-all connectivity and of depth $d(V)$. Given access to a  $D$-dimensional grid containing $n$ qubits, then there exists a unitary $U_{\mathrm{grid}}$, with circuit depth $d(U_{\mathrm{grid}}) = \order{d(V)Dn^{1/D}}$ that implements $V$ acting on the qubits in the grid and respects grid topology.    
\end{theorem}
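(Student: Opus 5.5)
We compile $V$ layer by layer, using Lemma~\ref{lem:route_grid_unitary} to move interacting qubits next to each other before every layer. Write $V = L_{d(V)} \cdots L_1$, where each layer $L_t$ is a product of $2$-local gates on disjoint pairs $\{(a_j, b_j)\}_j$. Throughout the construction we maintain a current placement $\sigma_t$ of logical qubits onto grid vertices, starting from the identity placement $\sigma_0$. For each layer we then carry out three steps.

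First, we fix a perfect matching $M$ of adjacent vertex pairs in the grid. Assuming some side length $n^{1/D}$ is even, pairing consecutive vertices along the first line graph $L^0_{n^{1/D}}$ gives $n/2$ disjoint edges. If $n$ is odd or the side lengths are odd, we match all but at most one vertex; this is harmless, since a layer has at most $\lfloor n/2\rfloor$ gates.

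Second, we pick a permutation $\pi_t$ of the grid vertices. It sends the current locations $\sigma_{t-1}(a_j), \sigma_{t-1}(b_j)$ of each gate pair to the two endpoints of a distinct edge of $M$. Idle qubits are sent to the remaining vertices arbitrarily. Such a bijection exists because the number of gate pairs is at most $|M|$.

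Third, we implement $\pi_t$ with the $\swap$ network $U_{\pi_t}$ of Lemma~\ref{lem:route_grid_unitary}, which has depth $\order{Dn^{1/D}}$. We then apply all gates of $L_t$ in a single layer on the edges of $M$; these are nearest-neighbour gates on disjoint edges, so this layer has depth $1$. Finally we update $\sigma_t = \pi_t\circ\sigma_{t-1}$.

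After the last layer we apply one more routing permutation $U_{\pi_{\mathrm{fin}}}$ that returns every logical qubit to its original vertex. This makes the overall circuit equal to $V$ itself rather than $V$ conjugated by a permutation. Correctness is a routine bookkeeping identity: conjugating a gate on logical qubits $(a,b)$ by the $\swap$ network is the same as applying that gate on physical qubits $(\sigma(a),\sigma(b))$. So $U_{\mathrm{grid}} = U_{\pi_{\mathrm{fin}}}\prod_t (\tilde L_t U_{\pi_t})$ equals $V$ exactly, where $\tilde L_t$ is the relabelled layer. The total depth is $d(V)\cdot(\order{Dn^{1/D}}+1) + \order{Dn^{1/D}} = \order{d(V) D n^{1/D}}$.

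The only step needing care is the second one: checking that a valid target permutation always exists, in particular the parity and odd-side-length edge cases for the matching. I would handle these by taking a Hamiltonian snake path through the grid and matching consecutive vertices along it, which leaves at most one vertex unmatched. Everything else reduces directly to Lemma~\ref{lem:route_grid_unitary}.
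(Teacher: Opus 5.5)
Your proposal is correct and follows the paper's proof. In both, $V$ is split into layers of disjoint two-qubit gates, and before each layer the interacting pairs are routed onto adjacent vertices with the $\order{Dn^{1/D}}$-depth $\swap$ network of Lemma~\ref{lem:route_grid_unitary}. The layer is then applied in constant depth, and a final permutation undoes the accumulated relabelling. Your snake-path matching argument additionally shows that the required permutations $\pi_i$ exist, which the paper only asserts.
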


\begin{proof}
We can first decompose $V$ into $d(V)$ layers, $V = V_{d(V)-1} \dots V_1 V_0$. These layers can be further decomposed into  $2$-local gates acting on a disjoint set of pairs of qubits, $V_i = \prod_{(l,k) \in \mathsf{I}_{i}} \tilde{V}_{l,k}$, where $\tilde{V}_{l,k}$ only acts on qubits $l$ and $k$ and $\mathsf{I}_{i}$ is the set of interactions, i.e., a matching of the $n$ qubits.
Note that these interaction might be highly non-local, and for these gates to be applied within the grid topology the qubits $l$ and $k$ have to be placed next to each other.
Let $\pi_i$ be a permutation such that all the interactions $\mathsf{I}_i$ become local (note that $\pi_i$ is not uniquely defined and might depend on the choice of $\pi_j$ for $j<i$). Then the following circuit implements $V$ on a $D$-dimensional grid:
\begin{equation}
    U_{\mathrm{grid}} = U_{\pi_{d(V)}}V_{d(V)-1}U_{\pi_{(d(V)-1)}}\dots V_1 U_{\pi_1} V_0 U_{\pi_0}.
\end{equation}
The last permutation $U_{\pi_{d(V)}}$ is required to reset the qubits into their original order. As $d(V_i) = \order{1}$ and $d(U_{\pi_i}) = \order{Dn^{1/D}}$ by Theorem~\ref{lem:route_grid_unitary}, we conclude that $d(U_{\mathrm{grid}}) = \order{d(V) D n^{1/D}}$.
\end{proof}

\noindent We now apply this routing protocol to the all-to-all random circuits in Theorem~\ref{thm:all-to-all-designs}, resulting in near-optimal approximate designs in $D$ dimensions:

\begin{corollary}[Theorem~\ref{theorem:main_designs}]
\label{cor:routed_result}
For any constant $D$, there exist strong $\varepsilon$-approximate unitary $k$-designs in the following circuit depth using circuits on a $D$-dimensional grid:
\begin{enumerate}
        \item $d = \order{(nk)^{1/D}\poly\log(n,k) \cdot \log\log(1/\varepsilon)}$ with $\torder{nk}$ auxiliary qubits.
        \item $d = \order{n^{1/D} k \cdot \poly\log(n) \cdot \log\log(k/\varepsilon)}$ with $\torder{n}$ auxiliary qubits.
\end{enumerate}
\end{corollary}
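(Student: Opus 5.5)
The plan is to compile the all-to-all constructions of Theorem~\ref{thm:all-to-all-designs} onto the grid using Theorem~\ref{thm:routing}. The auxiliary qubits are placed on the grid as well. For item~1 the total number of qubits is $N = n + \torder{nk} = \torder{nk}$. We therefore embed all $N$ qubits into a $D$-dimensional grid of side length $N^{1/D}$, padding with idle qubits if needed. The design property is a statement about the induced ensemble on the $n$ data qubits, so it is unaffected by how the auxiliaries are laid out.

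Routing is an exact implementation: every all-to-all layer $V_i$ is realised exactly by $U_{\pi_{i+1}}^{\dagger}$-type permutations interleaved with local gates, and the final permutation restores the original ordering. Hence the compiled circuit $U_{\mathrm{grid}}$ equals $V$ as a unitary. The induced ensemble is therefore identical, and it is a strong $\varepsilon$-approximate $k$-design with the same $\varepsilon$. No error analysis is needed beyond Theorem~\ref{thm:all-to-all-designs} itself. The only subtlety is that the permutations $\pi_i$ may depend on the sampled gates. In the constructions of \cite{schuster2025strong} the circuit architecture, meaning which pairs interact, is fixed and only the gates are random. So the $\pi_i$ can be chosen deterministically, and the routing circuit is a fixed SWAP network applied on top of the random gates.

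The depths then follow by direct substitution. For item~1, Theorem~\ref{thm:routing} with $N$ qubits gives depth
\[
\order{\left(\log n + \log k\cdot\log\log(nk/\varepsilon)\right) D\, (nk)^{1/D}\polylog(n,k)}.
\]
The $\polylog$ factor comes from the $\torder{\cdot}$ in the auxiliary count raised to the $1/D$ power. With $D$ constant, this can be bounded by $\order{(nk)^{1/D}\poly\log(n,k)\cdot\log\log(1/\varepsilon)}$, using $\log\log(nk/\varepsilon)\le \log\log(nk)+\log\log(1/\varepsilon)+O(1)$ and absorbing the $\log\log(nk)$ term into the polylog prefactor. (This requires $\log\log(1/\varepsilon)\gtrsim 1$; for constant $\varepsilon$ the bound holds trivially with the convention that the factor is at least $1$.) For item~2, $N=\torder{n}$, so the side length is $n^{1/D}\polylog(n)$. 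Multiplying by the all-to-all depth $\order{\log n + k\log\log(nk/\varepsilon)}$ and splitting $\log\log(nk/\varepsilon)\le\log\log n+\log\log(k/\varepsilon)+O(1)$ gives $\order{n^{1/D}k\poly\log(n)\log\log(k/\varepsilon)}$.

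I expect the main obstacle to be bookkeeping rather than substance. Specifically, two points need care. First, one must confirm that the auxiliary-qubit constructions of Theorem~\ref{thm:all-to-all-designs}, items~1 and~2, are genuinely circuits of two-qubit gates with a fixed interaction pattern, so that Theorem~\ref{thm:routing} applies verbatim. If some step uses multi-qubit gates or mid-circuit resets, it must first be decomposed into constant-depth two-local layers. Second, the additive terms must be checked to absorb into the stated product form when $\varepsilon$ is large or $k$ is small.
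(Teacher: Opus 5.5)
Your proposal is correct and takes the same approach as the paper: compile items 1 and 2 of Theorem~\ref{thm:all-to-all-designs} onto the grid with Theorem~\ref{thm:routing}, routing on a grid of $\torder{nk}$ (resp.\ $\torder{n}$) qubits, which is where the $(nk)^{1/D}$ factor comes from. Your splitting of $\log\log(nk/\varepsilon)$ into $\log\log(nk)+\log\log(1/\varepsilon)$ (resp.\ $\log\log n+\log\log(k/\varepsilon)$) supplies the bookkeeping the paper leaves implicit; your worry about sample-dependent $\pi_i$ is also unnecessary, since routing each sampled circuit separately still gives an ensemble of grid circuits equal to the original unitaries.
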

Note that the first result gets an additional scaling of $k^{1/D}$ due to the fact that the circuit acts on $\torder{nk}$ qubits, therefore routing has to be performed on a larger grid. This results in sublinear scaling with respect to $k$ at the cost of more auxillary qubits. 

\section{Optimal-depth strong PRUs} \label{sec:pru}

In this section we construct strong PRUs on a $D$-dimensional grid using the procedure in Corollary~\ref{corollary:PRU_gluing}.
We use the strong 2-design we developed in Section~\ref{sec:designs}, and we use the routing results from Section~\ref{sec:routing} to construct the small PRUs.

\begin{theorem}[Theorem~\ref{theorem:main_pru}]
\label{theorem:pru}
    Assuming sub-exponential post-quantum hardness of LWE, on a $D$-dimensional grid for constant $D$, one can construct strong PRUs in depth $\order{n^{1/D}}$ without auxiliary qubits.
\end{theorem}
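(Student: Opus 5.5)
We instantiate the four-step procedure of Corollary~\ref{corollary:PRU_gluing} on the $D$-dimensional grid and check that every layer has depth $\order{n^{1/D}}$ with no auxiliary qubits.

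\textbf{Outer strong 2-designs (Steps 1 and 4).} These need error negligible in $n$. By Theorem~\ref{theorem:2-designs_optimal} we can reach any error $\varepsilon \geq 2(n+1)e^{-n^{1/D}\log(2)/3}$ in depth $\order{Dn^{1/D} + \log(1/\varepsilon)}$. We take $\varepsilon$ equal to this threshold. Since $D$ is constant, $e^{-\Omega(n^{1/D})}$ is smaller than any inverse polynomial, and $\log(1/\varepsilon) = \order{n^{1/D}}$. Both layers therefore have depth $\order{n^{1/D}}$.

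\textbf{Inner patches (Steps 2 and 3).} Set $\xi = \log^{1+\delta}(n)$ for a fixed $\delta>0$. On each patch of $2\xi$ qubits we place the sub-exponentially secure, auxiliary-free strong PRU of Lemma~\ref{lemma:PRU_buildingblocks}, instantiated with security parameter $2\xi$. Its all-to-all depth is $\poly(\xi) = \polylog(n)$. Sub-exponential security means it resists adversaries running in time $2^{\Omega(\xi^{\delta'})} = 2^{\Omega(\log^{(1+\delta)\delta'}(n))}$. This is superpolynomial provided $(1+\delta)\delta' > 1$, so we choose $\delta$ large enough given the LWE constant $\delta'$. Hence polynomial-time adversaries cannot distinguish the patches from Haar random, which is exactly the hypothesis of Corollary~\ref{corollary:PRU_gluing}.

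\textbf{Geometric embedding.} This is the step I expect to be the main obstacle. We must lay out the patches on the grid so that each one occupies a region where Theorem~\ref{thm:routing} can compile it efficiently. The natural choice is a $D$-dimensional subcube of side $(2\xi)^{1/D}$. Routing inside that subcube, which is itself a $D$-dimensional grid, turns the $\polylog(n)$-depth circuit into depth $\polylog(n)\cdot \order{D\xi^{1/D}} = \polylog(n)$ with no auxiliary qubits.

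The difficulty is that Corollary~\ref{corollary:gluing_lemma} uses a 1D brickwork of overlapping patches $(2i-1,2i)$ and $(2i,2i+1)$. To handle this, we order the $m = n/\xi$ blocks of size $\xi$ along a snake (boustrophedon) path through a tiling of the grid into subcubes, so that consecutive blocks are adjacent. A union of two adjacent blocks is then a $2\times$ box, still a product of line graphs. Applying Lemma~\ref{lem:route_grid_unitary} to that box gives routing depth $\order{\xi^{1/D}}$.

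Alternatively, we can avoid geometry entirely: route the whole grid with Theorem~\ref{thm:routing}. This costs $\order{n^{1/D}}$ per layer of the inner circuit, but the inner circuits have $\polylog$ depth, giving $\order{n^{1/D}\polylog n}$. That is suboptimal, so the local-subcube embedding is necessary.

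The patches act on disjoint qubits and run in parallel, so each of Steps 2 and 3 has depth $\polylog(n)$. We also need to confirm that the gluing lemma's proof tolerates an arbitrary assignment of blocks to qubits. It does: Lemma~\ref{lemma:strong_gluing} is stated on abstract subsystems $a,b,c,d$, and the outer 2-designs act on all qubits. So any bijection between blocks and subcubes is valid.

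\textbf{Conclusion.} The total depth is $\order{n^{1/D}} + \polylog(n) = \order{n^{1/D}}$. Every step runs in polynomial time, so the ensemble is efficiently implementable. Corollary~\ref{corollary:PRU_gluing} then yields a strong PRU. Matching with Proposition~\ref{prop:lower_bound} gives $\Theta(n^{1/D})$.
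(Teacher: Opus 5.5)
Your proposal is correct and follows essentially the same route as the paper's proof. You combine negligible-error outer strong 2-designs from Theorem~\ref{theorem:2-designs_optimal} with sub-exponentially secure PRU patches on polylogarithmically many qubits, compiled locally by routing in depth $\order{D\xi^{1/D}}\cdot\polylog(n)=\polylog(n)$, and then apply Corollary~\ref{corollary:PRU_gluing}. Your choice $\xi=\log^{1+\delta}(n)$ with $(1+\delta)\delta'>1$ plays the role of the paper's $\xi=\log^{2/\delta}(n)$, where the paper's $\delta$ is your $\delta'$. Your snake-ordered tiling, which makes each brickwork patch a box of two face-adjacent blocks, spells out a layout detail that the paper's proof leaves implicit.
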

\begin{proof}
    By Lemma~\ref{lemma:PRU_buildingblocks} one can construct a strong PRU in the all-to-all connectivity in depth $d = \order{n^\ell}$ for some constant $\ell$, and this PRU is secure against attacks running in time $2^{\order{n^\delta}}$, with $\delta>0$ some constant. We can use this to construct strong PRUs on $\xi = \log^{2/\delta}(n)$ qubits in depth $\order{(\log^{2/\delta}(n))^\ell} = \polylog(n)$. 
    These PRUs are secure against attacks running in time $2^{\order{\log^2(n)}}$, which is larger than any polynomial in $n$.
    We can now use routing, Theorem~\ref{thm:routing}, to construct this PRU in depth $\order{D\log^{2/(\delta D)}(n) \cdot \polylog(n)} = \polylog(n)$. We finally combine Corollary~\ref{corollary:PRU_gluing} and Theorem~\ref{theorem:2-designs_optimal} to prove the theorem.
\end{proof}
\begin{acknowledgements}
We would like to thank Thomas Schuster for clarifications on the strong gluing lemma and Jonas Haferkamp for useful conversations. JH acknowledges funding from the Dutch Research Council (NWO) through a Veni grant (grant No.VI.Veni.222.331) and the Quantum Software Consortium (NWO Zwaartekracht Grant No.024.003.037). AD is funded through an Ada Lovelace fellowship from the Quantum Software Consortium. LG is funded by NWO through NGF.1623.23.005. MF acknowledges funding from the Quantum Software Consortium.

\end{acknowledgements}

\bibliographystyle{alphaurl}
\bibliography{bibliography}

\end{document}